\documentclass[12pt,authoryear,review]{elsarticle}
\usepackage[latin9]{inputenc}
\usepackage[left=1.25in,right=1.25in,top=1in,bottom=1in,a4paper]{geometry}
\geometry{verbose}

\usepackage{url}
\usepackage{amsmath}
\usepackage{mathrsfs}
\usepackage{amsthm}
\usepackage{amssymb}
\usepackage{scalefnt}
\usepackage{amsfonts}
\usepackage[singlespacing]{setspace}
\usepackage[bottom]{footmisc}
\usepackage{indentfirst}
\usepackage{endnotes}
\usepackage{afterpage}
\usepackage{caption}
\usepackage{thmtools}
\usepackage{float}
\usepackage{mathpazo}
\usepackage{chngcntr}
\usepackage{apptools}
\usepackage{makecell}

\AtAppendix{\counterwithin{definition}{section}}
\AtAppendix{\counterwithin{lemma}{section}}
\AtAppendix{\counterwithin{proposition}{section}}

\usepackage[unicode=true,
 bookmarks=false,
 breaklinks=false,pdfborder={0 0 1}, colorlinks=false]
 {hyperref}
\hypersetup{
 colorlinks=true, linkcolor= blue, citecolor=blue, urlcolor=blue, pdfauthor=author}

\theoremstyle{plain}
\newtheorem{theorem}{Theorem}
\newtheorem{corollary}{Corollary}
\newtheorem{assumption}{Assumption}
\newtheorem{definition}{Definition}
\newtheorem{example}{Example}
\newtheorem{lemma}{Lemma}
\newtheorem{proposition}{Proposition}
\newtheorem{remark}{Remark}

\DeclareMathOperator*{\argmax}{arg\,max}

\newcommand\cites[1]{\citeauthor{#1}'s\ (\citeyear{#1})}\newcommand{\com}[1]{}

\linespread{1.5}

\begin{document}

\begin{frontmatter}{}

\title{Revenue Comparisons of Auctions\\with Ambiguity Averse Sellers}

\author[A1]{Sosung Baik}

\ead{sosung.baik@gmail.com}

\author[A1]{Sung-Ha Hwang\corref{cor1}}

\ead{sungha@kaist.ac.kr}

\cortext[cor1]{\today. Corresponding author. The research of S.-H. H. was supported by
the National Research Foundation of Korea Grant funded by the Korean
Government.}

\address[A1]{Korea Advanced Institute of Science and Technology (KAIST), Seoul, Korea}

\begin{abstract}

We study the revenue comparison problem of auctions when the seller has a maxmin expected utility preference. The seller holds a set of priors around some reference belief, interpreted as an approximating model of the true probability law or the focal point distribution. We develop a methodology for comparing the revenue performances of auctions: the seller prefers auction $X$ to auction $Y$ if their transfer functions satisfy a weak form of the single-crossing condition. Intuitively, this condition means that a bidder's payment is more negatively associated with the competitor's type in $X$ than in $Y$. Applying this methodology, we show that when the reference belief is independent and identically distributed (IID) and the bidders are ambiguity neutral, (i) the first-price auction outperforms the second-price and all-pay auctions, and (ii) the second-price and all-pay auctions outperform the war of attrition. Our methodology yields results opposite to those of the Linkage Principle.

\end{abstract}
\begin{keyword}
Auctions; Ambiguity; Revenue comparison.

\textbf{JEL Classification Numbers:} D44, D81, D82.
\end{keyword}
\end{frontmatter}

\thispagestyle{empty}

\newpage
\section{Introduction\setcounter{page}{1}}

Since the establishment of the Revenue Equivalence Principle \citep{Myer81}, an important problem of auction theory is to compare the revenue performances of different auctions in setups relaxing \cites{Myer81} standard assumptions. The Linkage Principle \citep{Mil82, Kris97}, one of the fundamental results in this direction, states that in the affiliated interdependent values setup, auctions with stronger positive linkages between a bidder's payment and her own signal yield higher expected revenues. Succeeding works study the effects of the bidders' risk aversion \citep{Mas84}, the seller's risk aversion \citep{Wae98}, the bidders' financial constraints \citep{Che98}, and asymmetric valuation distributions \citep{Mas00}. 

This paper studies the revenue comparison problem in which the seller's preference exhibits ambiguity aversion \citep{Ell61}. One of our main contributions, Theorem \ref{thm:methodology}, provides a methodology to compare the revenue performances of different auctions. Intuitively, it states that auctions in which a bidder's payment is more negatively associated with the competitor's type yield higher revenues. Applying this methodology, we compare the revenues of four commonly studied auctions: the first-price, second-price, all-pay auctions and war of attrition.

Following the maxmin expected utility model \citep[MMEU;][]{Gil89}, the seller holds a set of priors around some \textit{reference belief} and evaluates an auction by the \textit{worst-case revenue}, the minimum expected revenue over the set of priors. The reference belief can be interpreted as an approximation of the true distribution \citep{Han01, Han08} or the focal point distribution \citep{Bo06, Bo09}. To present our results clearly, we focus primarily on the case of ambiguity neutral bidders; however, most of our results extend to the case of ambiguity averse bidders (Section \ref{subsec:bidder_aversion}).

To develop our main methodology, we first show that in finding the beliefs that minimize the seller's expected revenue, we can restrict attention to a special class of beliefs within the set of priors named the \textit{decreasing rearrangements} (Theorem \ref{thm:restriction}). A rearrangement of a belief reassigns probabilities over the state space; it is called a decreasing rearrangement if it overweights the likelihood of low types and underweights that of high types relative to the reference belief (Definition \ref{def:rearrangement}). Since facing low types is an unfavorable event and facing high types is a favorable event for the seller, Theorem \ref{thm:restriction} is intuitive. To confirm this intuition, for a given belief, we explicitly construct its decreasing rearrangement that yields a lower expected revenue than the original belief. To ensure that this decreasing rearrangement lies in the set of priors, we assume that the set of priors is \textit{rearrangement invariant}, i.e., it is closed under the rearrangement operation (Assumptions \ref{assum:Q}-\ref{assum:Q_indep_IID}). This assumption is satisfied by 
a wide range of sets of priors used in the literature---most importantly, the relative entropy neighborhood \citep[Example \ref{ex:Q};][]{Han01, Han08}.

Building on Theorem \ref{thm:restriction}, Theorem \ref{thm:methodology} states that the seller prefers auction $X$ to auction $Y$ if the following two conditions hold. First, each type of bidder's payment is greater (or smaller) in $X$ than in $Y$ against a competitor with low types (or high types) (\textit{Weak Single-Crossing Condition, WSCC}; Figure \ref{fig:single_crossing_condition}). This condition is a weak form of the standard \textit{single-crossing condition} (SCC) in auction theory \citep{Mil04}; hence the name WSCC. Intuitively, WSCC means that a bidder's payment is more negatively associated with the competitor's type in $X$ than in $Y$. Second, $X$ yields at least as high interim expected revenues as $Y$ under the reference belief (\textit{Reference Revenue Condition, RRC}). In applications, the second condition automatically holds as  equality by the Revenue Equivalence Principle \citep{Myer81}. Thus, Theorem \ref{thm:methodology} shows that auctions with stronger negative associations between a bidder's payment and her competitor's type yield higher worst-case revenues.

The intuition behind Theorem \ref{thm:methodology} is as follows. By WSCC, a bidder's payment is greater (or smaller) in $X$ than in $Y$ against a competitor with low types (or high types). However, a decreasing rearrangement overweights the likelihood of low types and underweights that of high types relative to the reference belief. Hence, it overweights the event that the bidder's payment is greater in $X$ than in $Y$, and underweights the opposite event. This, together with RRC, implies that under any decreasing rearrangement, $X$ yields a higher expected revenue than $Y$. By Theorem \ref{thm:restriction}, the worst-case revenue---the minimum over the decreasing rearrangements within the set of priors---is higher in $X$ than in $Y$.

Then, applying Theorem \ref{thm:methodology}, we establish the worst-case revenue rankings between the four commonly studied auctions (Theorem \ref{thm:ranking} and Figures \ref{fig:summary}-\ref{fig:payment_schedule_comparison}). We find that when the reference belief is independent and identically distributed (IID) and the bidders are ambiguity neutral, (i) the first-price auction outperforms the second-price and all-pay auctions, and (ii) the second-price and all-pay auctions outperform the war of attrition. The ranking between the second-price and all-pay auctions is indeterminate (Figure \ref{fig:SPA_APA_indeterminate}).

Notably, the worst-case revenue rankings in Theorem \ref{thm:ranking} are opposite to the expected revenue rankings in the affiliated values setup \citep{Mil82, Kris97}. This is because Theorem \ref{thm:methodology} works in the opposite direction to the Linkage Principle (Proposition \ref{prop:linkage_4}). Recall that according to Theorem \ref{thm:methodology}, if a bidder's payment is more negatively associated with the competitor's type in auction $X$ than in auction $Y$ (WSCC), then $X$ outperforms $Y$. By contrast, the Linkage Principle states that if a bidder's payment is more negatively associated with her own type in $X$ than in $Y$ (Linkage Condition, LC; Theorem \ref{thm:linkage}), then $Y$ outperforms $X$. However, a negative association between a bidder's payment and her competitor's type creates a negative association between her payment and her own type in the affiliated values setup. As a result, WSCC and LC hold simultaneously, and thus the two principles predict opposite results. This logic also implies that in the presence of both ambiguity and affiliation, the rankings between the four auctions are indeterminate (Figure \ref{fig:affiliation}).

Our paper is related to \cite{Che98} in that a version of the single-crossing condition determines the revenue ranking between auctions. Specifically, \cite{Che98} study the setup in which each bidder has private information about her valuation and budget. They show that if the iso-bid curves of two auctions in the two-dimensional space of valuation and budget satisfy a single-crossing condition, their revenues can be compared. \cite{Wae98} also analyze the setup where the seller is risk averse, a natural benchmark for our study. They show that the first-price auction outperforms the second-price auction because the winner's payment is less variable (in the sense of second-order stochastic dominance) in the first-price auction than in the second-price auction. However, their result relies on the assumption that the loser pays nothing, which is violated in the all-pay auction and war of attrition.

The remainder of this paper is organized as follows. Section \ref{sec:model} presents our setup. Section \ref{sec:methodology} develops our main methodology. As an application, Section \ref{sec:ranking_4} compares the four commonly studied auctions. Section \ref{sec:linkage} discusses the relationship between our methodology and the Linkage Principle. Section \ref{sec:extension} provides two extensions: ambiguity averse bidders and ambiguity seeking seller. Section \ref{sec:discussion} discusses the related literature and concludes the paper.

\section{Model} \label{sec:model}

\subsection{Agents and preferences} \label{subsec:MMEU}

A seller wants to sell an indivisible object to two bidders, denoted by bidders $1$ and $2$. Each bidder has a privately known type $\theta \in \Theta = [0, 1]$ representing her valuation for the object. There is a commonly known \textit{reference belief} $P$, a probability measure on $\Theta^2$. As mentioned in the introduction, $P$ can be interpreted as an approximation of the true probability law \citep{Han01, Han08} or the focal point distribution \citep{Bo06, Bo09}. Assume $P$ has a positive probability density.

The seller, being ambiguity averse, takes into account the possibility that the true probability law differs from the reference belief. Specifically, she holds a set of priors about the joint type distribution $\mathcal{Q}$ around the reference belief, where $P \in \mathcal{Q}$. For technical reasons, assume $\mathcal{Q}$ is weakly compact. As mentioned in the introduction, we focus primarily on the case of ambiguity neutral bidders, in which the bidders believe that types are drawn according to the reference belief $P$. The case of ambiguity averse bidders is discussed in Section \ref{subsec:bidder_aversion}.

For a given auction, let $t_i(\theta, \theta')$ denote bidder $i$'s payment when her type is $\theta$ and her competitor's type is $\theta'$. We call $t = (t_1, t_2): \Theta^2 \rightarrow \mathbb{R}_+^2$ the \textit{transfer function}. Following the MMEU model \citep{Gil89}, the seller evaluates an auction by the \textit{worst-case revenue} $\mathcal{R}(t)$, the minimum expected revenue over the set of priors:
\[
    \mathcal{R}(t) := \min_{Q \in \mathcal{Q}} \iint_{\Theta^2} [t_1(\theta, \theta')+t_2(\theta', \theta)] Q(d\theta, d\theta').
\]

\subsection{Rearrangement}

In this section, we first explain the concepts related to \textit{rearrangement}, and then describe our assumption on the seller's set of priors named \textit{rearrangement invariance}. To this end, consider a probability space $(\Omega, \Sigma, \mu)$. Let $\Delta(\Omega, \mu)$ be the set of all probability measures over $\Omega$ that are absolutely continuous with respect to $\mu$. We introduce the following definition:


\begin{definition} \label{def:rearrangement} 

Let $\mathcal{S} \subset \Delta(\Omega, \mu)$.

\noindent (i) We say $\nu' \in \Delta(\Omega, \mu)$ is a \textbf{rearrangement} of $\nu \in \Delta(\Omega, \mu)$ (with respect to $\mu$) if
\begin{equation} \label{eq:def_rearrangement}
\mu \{ \omega: \frac{d\nu'}{d\mu}(\omega) \leq c\} = \mu \{ \omega: \frac{d\nu}{d\mu}(\omega) \leq c\} \quad \text{for all $c \geq 0$.}
\end{equation}

\noindent (ii) Suppose $\Omega$ is endowed with a partial order $\leq$. We say $\nu' \in \Delta(\Omega, \mu)$ is a \textbf{decreasing rearrangement} of $\nu \in \Delta(\Omega, \mu)$ (with respect to $\mu$) if $\nu'$ is a rearrangement of $\nu$, and
\begin{equation} \label{eq:def_dec_rearrangement}
\omega \leq \omega' \implies \frac{d\nu'}{d\mu}(\omega) \geq \frac{d\nu'}{d\mu}(\omega').\footnote{When $\Omega$ is an interval in $\mathbb{R}$, condition \eqref{eq:def_dec_rearrangement} holds if and only if $\nu'$ is likelihood ratio dominated by $\mu$. However, in higher dimensions, condition \eqref{eq:def_dec_rearrangement} is weaker than likelihood ratio dominance \citep[see][Sec. 6.E]{Sha94}.}
\end{equation}
Whenever $\Omega$ is a product space, $\leq$ is assumed to be the componentwise order. In this case, condition \eqref{eq:def_dec_rearrangement} means that $d\nu/d\mu$ decreases in each argument.\footnote{Throughout this paper, ``increasing'' means ``non-decreasing'' and ``decreasing'' means ``non-increasing''.}

\noindent (iii) For $\mathcal{Q} \subset \mathcal{S}$, we say $\mathcal{Q}$ is \textbf{rearrangement invariant} relative to $\mathcal{S}$ (with respect to $\mu$) if whenever a belief belongs to $\mathcal{Q}$, its rearrangements in $\mathcal{S}$ belong to $\mathcal{Q}$: i.e.,
\[
\text{$\nu \in \mathcal{Q}$, \, and \, $\nu' \in \mathcal{S}$ is a rearrangement of $\nu$} \implies \nu' \in \mathcal{Q}.
\]
In particular, if $\mathcal{S} = \Delta(\Omega, \mu)$, we simply say $\mathcal{Q}$ is rearrangement invariant.

\end{definition}

\begin{figure} [t]
\centering
\includegraphics[scale=.95]{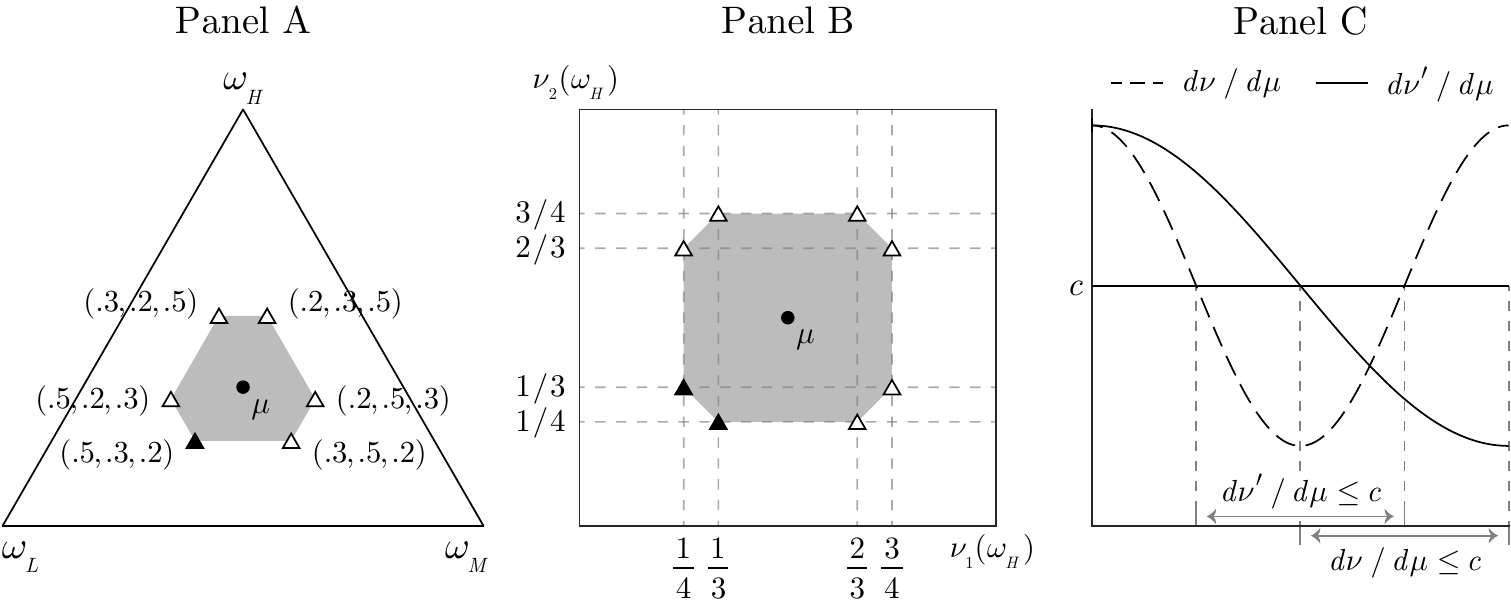}
\caption{\textbf{Rearrangements.} \\
\textbf{Panel A: Discrete states, the domain of all beliefs.} Let $\Omega = \{\omega_L, \omega_M, \omega_H\}$ (where $\omega_L < \omega_M < \omega_H$), $\mu$ be uniform on $\Omega$, and $\mathcal{S} = \Delta(\Omega, \mu)$. A belief $\nu \in \mathcal{S}$ is represented by point $(\nu(\omega_L), \nu(\omega_M), \nu(\omega_H))$ on the simplex. Each $\nu \in \mathcal{S}$ has $3!=6$ rearrangements, marked by triangles. Among them, the black triangle is the decreasing rearrangement. The shaded hexagon is rearrangement invariant. \\
\textbf{Panel B: Discrete states, the domain of independent beliefs.} Let $\Omega = \{\omega_L, \omega_H\} \times \{\omega_L, \omega_H\}$ (where $\omega_L < \omega_H$), $\mu$ be uniform on $\Omega$, and $\mathcal{S}$ be the set of independent beliefs on $\Omega$. A belief $\nu \in \mathcal{S}$ is represented by point $(\nu_1(\omega_H), \nu_2(\omega_H)) \in [0,1]^2$, where $\nu_i$ denotes the $i$-th marginal probability measure of $\nu$. Each $\nu \in \mathcal{S}$ has $4!=24$ rearrangements; out of these, $8$ beliefs marked by triangles are independent, i.e., lie in $\mathcal{S}$. Among them, the two black triangles are the decreasing rearrangements. The shaded octagon is rearrangement invariant relative to $\mathcal{S}$. \\ 
\textbf{Panel C: Continuous states.} Let $\Omega = [0, 1]$ and $\mu$ be uniform over $\Omega$. Suppose that the Radon-Nikodym derivatives of $\nu, \nu' \in \Delta(\Omega, \mu)$ are given as in the figure. Since the lower contour sets of $d\nu/d\mu$ and $d\nu'/d\mu$ have the same length, condition \eqref{eq:def_rearrangement} holds. Because $d\nu'/d\mu$ is decreasing, $\nu'$ is a decreasing rearrangement of $\nu$.} \label{fig:rearrangement_def}

\end{figure}

Figure \ref{fig:rearrangement_def} illustrates Definition \ref{def:rearrangement}. To explain Definition \ref{def:rearrangement}, consider the simple case of discrete $\Omega$ and uniform $\mu$. Then, the rearrangements are equivalent to the permutations of probabilities over states (Panels A-B, triangles). Among them, the decreasing rearrangements are the permutations that assign high probabilities to low states and low probabilities to high states (Panels A-B, black triangles). Accordingly, rearrangement invariance requires that a set of priors remains unchanged under permutations (Panels A-B, shaded regions). This means that the set of priors---and hence the degree of ambiguity it represents---is independent of the specific ordering of states. Definition \ref{def:rearrangement} extends these concepts to general state spaces (including continuous spaces; Panel C).

One of our main results in Section \ref{sec:methodology} states that when the set of priors is rearrangement invariant, in finding the beliefs that minimize the seller's expected revenue, it suffices to focus on the decreasing rearrangements within the set of priors (Theorem \ref{thm:restriction}). To prove this result, for a given belief in the set of priors, we construct its decreasing rearrangement which yields a lower expected revenue than the original belief. The rearrangement invariance property ensures that this decreasing rearrangement lies in the set of priors.

The rearrangement invariance in Definition \ref{def:rearrangement} is also closely related to a property known as \textit{probabilistic sophistication} \citep{Machi92, Ghi02, Macc06, Cer11, Cer12}.\footnote{This property is also called the \textit{neutrality axiom} in the literature on probabilistic risk aversion \citep{Yaa87, Saf98}.} An MMEU decision maker with a set of priors $\mathcal{Q}$ is said to be probabilistically sophisticated if the following holds: for all bounded measurable functions $T, T': \Omega \rightarrow \mathbb{R}$,
\begin{align}
&\mu \{ \omega: T(\omega) \leq c \} = \mu \{ \omega: T'(\omega) \leq c \} \quad \text{for all} \, c \in \mathbb{R} \label{eq:prob_soph_equal_dtn} \\
&\implies \min_{\nu \in \mathcal{Q}} \iint_{\Omega} T d\nu = \min_{\nu \in \mathcal{Q}} \iint_{\Omega} T' d\nu. \label{eq:prob_soph_equal_util}
\end{align}
That is, if two acts $T$ and $T'$ have the same outcome distribution under the reference belief $\mu$, the decision maker is indifferent between them.

\citet[][Thm. 14]{Macc06} show that $\mathcal{Q}$ is rearrangement invariant if and only if the decision maker is probabilistically sophisticated. To illustrate the ``only if'' direction, suppose again that $\Omega$ is discrete and $\mu$ is uniform, and let $\mathcal{Q}$ be rearrangement invariant. It can be shown that if two acts $T$ and $T'$ satisfy condition \eqref{eq:prob_soph_equal_dtn}, then $T'$ is a permutation of $T$. This implies that the minimum expectation of $T'$ over $\mathcal{Q}$ can be expressed as that of $T$ over a permutation of $\mathcal{Q}$. By rearrangement invariance, the permutation of $\mathcal{Q}$ coincides with $\mathcal{Q}$, establishing equation \eqref{eq:prob_soph_equal_util}.\footnote{More precisely, condition \eqref{eq:prob_soph_equal_dtn} implies that there exists a permutation $\sigma$ over $\Omega$ satisfying $T' = T \circ \sigma$. Then, it can be shown that $\min_{\nu \in \mathcal{Q}} \int_\Omega T' d\nu = \min_{\nu \in \mathcal{Q} \circ \sigma^{-1}} \int_\Omega T d\nu$, where $\mathcal{Q} \circ \sigma^{-1} := \{ \nu \circ \sigma^{-1}: \nu \in \mathcal{Q} \}$. Rearrangement invariance implies $\mathcal{Q} \circ \sigma^{-1} = \mathcal{Q}$, establishing equation \eqref{eq:prob_soph_equal_util}.} Hence, the decision maker is probabilistically sophisticated.

\medskip

Now, returning to the auction setup, consider three domains of beliefs $\mathcal{S}$:
\begin{align*}
\Delta(\Theta^2, P) &:= \{ Q: \text{$Q$ is a belief over $\Theta^2$ such that $Q \ll P$} \} \\
\Delta^{Ind}(\Theta^2, P) &:= \{ Q \in \Delta(\Theta^2, P): \text{$Q$ is independent, i.e, $Q = Q_1 \times Q_2$} \}\\
\Delta^{IID}(\Theta^2, P) &:= \{ Q \in \Delta(\Theta^2, P): \text{$Q$ is IID, i.e, $Q= Q_1 \times Q_2$ and  $Q_1=Q_2$} \},
\end{align*}
where $Q_i$ denotes the $i$-th marginal probability measure of $Q$. Observe that $\Delta(\Theta^2, P) \supset \Delta^{Ind}(\Theta^2, P) \supset \Delta^{IID}(\Theta^2, P)$.

Our first assumption on the seller's set of priors $\mathcal{Q}$ is as follows:
\begingroup
\renewcommand\theassumption{1A}
\begin{assumption} \label{assum:Q}
For $\mathcal{S} = \Delta(\Theta^2, P)$, the following holds:

\noindent (i) $\mathcal{Q} \subset \mathcal{S}$.

\noindent (ii) $\mathcal{Q}$ is rearrangement invariant with respect to $\mathcal{S}$.
\end{assumption}
\endgroup

\noindent This assumption holds for a wide range of sets of priors used in the literature, as shown in Example \ref{ex:Q}.

\begin{example} [Set of priors] \label{ex:Q} \normalfont \,

\noindent \textbf{(a) Divergence neighborhood \citep{Han01, Han08}.}

\noindent The $\phi$-divergence is a measure of discrepancy between probability measures used in information theory and statistics \citep{Ali66}. Given a convex function $\phi: \mathbb{R}_+ \rightarrow \mathbb{R}$, the $\phi$-divergence is defined as follows: for probability measures $\mu$ and $\nu$ on the same state space,
\[
D(\nu||\mu) := \int \phi \left(\frac{d\nu}{d\mu}\right) d\mu \quad \text{if $\nu \ll \mu$,} \quad \text{and} \quad D(\nu||\mu):=\infty \quad \text{otherwise.}
\]
In the special case of $\phi(z) \equiv z \log z$, $\phi$-divergence becomes the popular \textit{relative entropy} \citep{Kull51}.

Now, let $\mathcal{Q}$ be the set of beliefs that are close to the reference belief, where the ``closeness'' is measured by divergence:
\[
    \mathcal{Q} := \{ Q \in \Delta(\Theta^2, P): D(Q||P) \leq \eta \}.
\]
Here, the parameter $\eta \geq 0$ represents the degree of ambiguity. \citet[Thm. 14 and Lem. 15]{Macc06} show that $\mathcal{Q}$ satisfies Assumption \ref{assum:Q}. This is one of the most popular ambiguity sets in the robustness literature \citep{Han01, Han08, Ben13}.

\noindent \textbf{(b) Bounded likelihood ratio \citep{Lo98, Bo06}.}

\noindent Let $\mathcal{Q}$ be the set of beliefs whose likelihood ratios lie in a given interval:
\[
\mathcal{Q} := \{ Q \in \Delta(\Theta^2, P): dQ/dP \in [1-\alpha \eta, 1+\beta \eta] \},
\]
where $\eta \geq 0$ represents the degree of ambiguity and $\alpha, \beta \geq 0$. Because the rearrangement operation preserves the range of $dQ/dP$, $\mathcal{Q}$ satisfies Assumption \ref{assum:Q}. In the limiting case of $\beta = \infty$, $\mathcal{Q}$ reduces to the \textit{contamination model}:
\[
\mathcal{Q} := \{ Q = \eta R + (1-\eta) P: R \in \Delta(\Theta^2, P) \},
\]
where $\alpha$ is normalized to $1$. This model is widely used in the literature on mechanism design with ambiguity \citep{Bo06, Bo09}. $\square$

\end{example}

Some studies suppose that the set of priors consists only of independent beliefs or of IID beliefs \citep[e.g.,][]{Lo98, Bo06}. This corresponds to situations in which the seller has additional information that types are independent or IID. In these cases, because a rearrangement of an independent (or IID) belief is not necessarily independent (or IID), Assumption \ref{assum:Q} does not hold. To address this issue, we assume that the set of priors is rearrangement invariant relative to the domain of independent beliefs, or to the domain of IID beliefs.

\begingroup
\renewcommand\theassumption{1B}
\begin{assumption} \label{assum:Q_indep_IID} 

For $\mathcal{S} = \Delta^{Ind}(\Theta^2, P) \text{ or } \Delta^{IID}(\Theta^2, P)$, the following holds:

\noindent (i) $\mathcal{Q} \subset \mathcal{S}$.

\noindent (ii) $\mathcal{Q}$ is rearrrangement invariant relative to $\mathcal{S}$.

\end{assumption}
\endgroup

We provide two examples that satisfy Assumption \ref{assum:Q_indep_IID}.

\begin{example} [Set of priors: Continued] \label{ex:Indep_IID} \normalfont \,

\noindent The natural analogues of Example \ref{ex:Q} (a) and (b) are given as follows:
\begin{align*}
\textbf{(a-Ind)} \quad \mathcal{Q} &:= \{ Q \in \Delta^{Ind}(\Theta^2, P): D(Q_i||P_i) \leq \eta \quad \text{for $i=1, 2$} \} \\
\textbf{(a-IID)} \quad \mathcal{Q} &:= \{ Q \in \Delta^{IID}(\Theta^2, P): D(Q_1||P_1) \leq \eta \} \\
\textbf{(b-Ind)} \quad \mathcal{Q} &:= \{ Q \in \Delta^{Ind}(\Theta^2, P): dQ_i/dP_i \in [1-\alpha \eta, 1+\beta \eta] \quad \text{for $i=1, 2$} \} \\
\textbf{(b-IID)} \quad \mathcal{Q} &:= \{ Q \in \Delta^{IID}(\Theta^2, P): dQ_1/dP_1 \in [1-\alpha \eta, 1+\beta \eta] \}.
\end{align*}
The last model (b-IID) is used by \cite{Lo98}. $\square$

\end{example}

\section{Main results} \label{sec:methodology}

This section develops a methodology to compare worst-case revenues. Assumption \ref{assum:t} is a common property of most standard auctions:

\setcounter{assumption}{1}
\begin{assumption} \label{assum:t}
(i) The total transfer $t_1(\theta, \theta')+t_2(\theta', \theta)$ increases in each argument. 

\noindent (ii) The probability measure induced by the total transfer from $P$ is atomless: i.e.,
\[
P \{ (\theta, \theta'): t_1(\theta, \theta')+t_2(\theta', \theta) = c \} = 0 \quad \text{for all $c \in \mathbb{R}_+$.}
\]
\end{assumption}

Theorem \ref{thm:restriction} states that for an auction satisfying Assumption \ref{assum:t}, the seller's worst-case revenue equals the minimum expected revenue over the decreasing rearrangements within the set of priors. Thus, in finding the beliefs that minimize the seller's expected revenue, we can restrict attention to the decreasing rearrangements.



\begin{theorem} \label{thm:restriction}

Suppose $\mathcal{Q}$ satisfies Assumption \ref{assum:Q} or \ref{assum:Q_indep_IID}. Let
\[
\mathcal{Q}^* := \{ Q^* \in \mathcal{Q}: \text{$Q^*$ is a decreasing rearrangement of $Q \in \mathcal{Q}$} \}.
\]
Then, for an auction whose transfer function $t$ satisfies Assumption \ref{assum:t},
\[
    \mathcal{R}(t) = \min_{Q^* \in \mathcal{Q}^*} \iint_{\Theta^2} [t_1(\theta, \theta')+t_2(\theta', \theta)] Q^*(d\theta, d\theta').
\]

\end{theorem}
\begin{proof}
See Section \ref{subsec:methodology_proof}.
\end{proof}

\begin{figure} [t]
\centering
\includegraphics[scale=1]{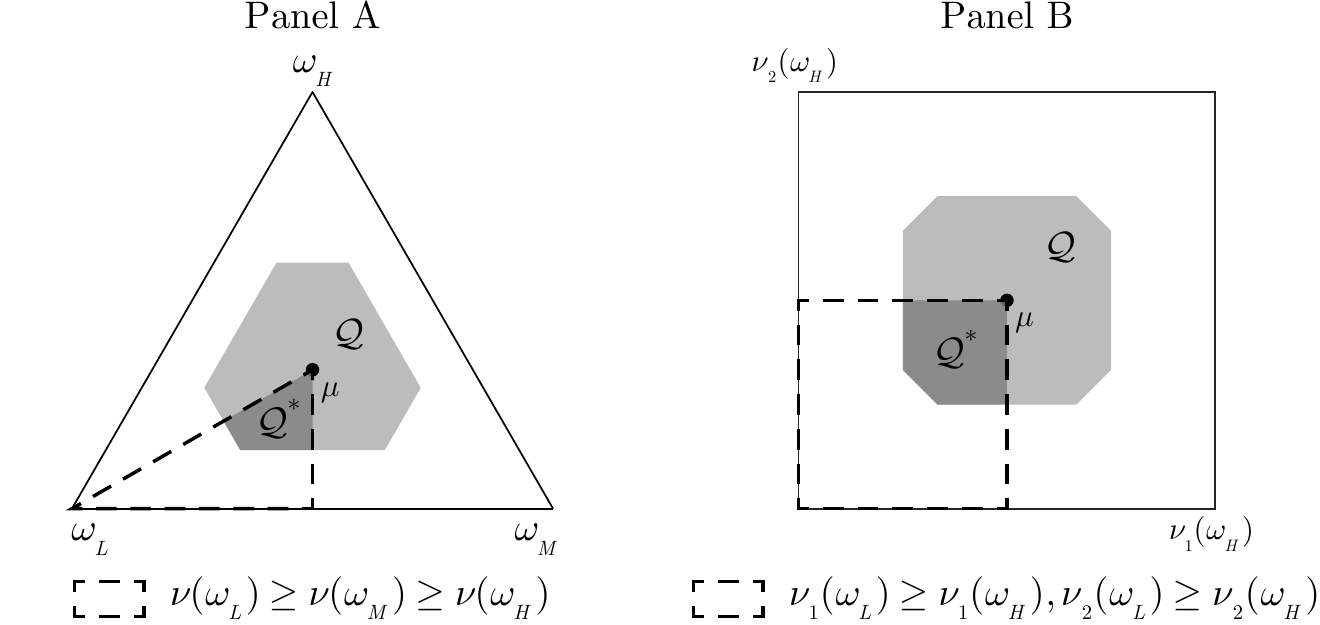}
\caption{\textbf{Decreasing rearrangements.} Let $\Omega$, $\mu$ and $\mathcal{S}$ be given as in Panels A-B of Figure \ref{fig:rearrangement_def}. In each panel, the shaded region represents $\mathcal{Q}$, and the region enclosed by the dashed line represents the set of beliefs $\nu \in \mathcal{S}$ such that $d\nu/d\mu$ is decreasing. The intersection between the two regions corresponds to $\mathcal{Q}^*$ in Theorem \ref{thm:restriction}.} \label{fig:decreasing_rearrangement}
\end{figure}

\noindent Figure \ref{fig:decreasing_rearrangement} illustrates the set of decreasing rearrangements $\mathcal{Q}^*$. Equivalently, $\mathcal{Q}^*$ can be expressed as the set of beliefs with decreasing likelihood ratios $dQ/dP$:
\[
\mathcal{Q}^* = \{ Q^* \in \mathcal{Q}: \text{$\frac{dQ}{dP}$ decreases in each argument} \}.
\]

Building on Theorem \ref{thm:restriction}, Theorem \ref{thm:methodology} states that the seller prefers auction $X$ to auction $Y$ if two conditions hold. First, given $i$ and $\theta$, there exists a threshold $\hat \theta$ such that bidder $i$ of type $\theta$ pays a greater (or smaller) amount in $X$ than in $Y$ against a competitor of type $\theta'<\hat \theta$ (or $\theta'> \hat \theta$) (\textit{Weak Single-Crossing Condition, WSCC}; Figure \ref{fig:single_crossing_condition}). This means that a bidder's payment is more negatively associated with the competitor's type in $X$ than in $Y$. Second, under the reference belief, $X$ yields at least as high interim expected revenues as $Y$ (\textit{Reference Revenue Condition, RRC}). In later applications, this condition automatically holds as an equality by the Revenue Equivalence Principle \citep{Myer81}. Thus, Theorem \ref{thm:methodology} shows that a negative association between a bidder's payment and her competitor's type increases worst-case revenue.

\begin{theorem} \label{thm:methodology} 

Suppose $\mathcal{Q}$ satisfies Assumption \ref{assum:Q} or \ref{assum:Q_indep_IID}. Let $X$ and $Y$ be auctions whose transfers $t^X$ and $t^Y$ satisfy Assumption \ref{assum:t}. Assume the following conditions:

\noindent (i) \textbf{Weak Single-Crossing Condition (WSCC).} For all $i$ and $\theta$, there exists a threshold $\hat \theta \in [0, 1]$ such that
\begin{align*}
\theta' < \hat \theta \implies t^X_i(\theta, \theta') \geq t^Y_i(\theta, \theta'), \quad \text{and} \quad \theta' > \hat \theta \implies t^X_i(\theta, \theta') \leq t^Y_i(\theta, \theta').
\end{align*}

\noindent (ii) \textbf{Reference Revenue Condition (RRC).} For all $i \neq j$ and $\theta$,
\[
\int_\Theta t^X_i(\theta, \theta') P(d\theta'|\theta) \geq \int_\Theta t^Y_i(\theta, \theta') P(d\theta'|\theta),
\]
where $P(\cdot|\theta)$ is the conditional distribution of bidder $j$'s type given bidder $i$'s type $\theta$.\footnote{For notational simplicity, we omit the dependence of the conditional distribution on $(i, j)$.}

\noindent Then,
\[
\mathcal{R}(t^X) \geq \mathcal{R}(t^Y).
\]

\end{theorem}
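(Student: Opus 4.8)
The plan is to leverage Theorem~\ref{thm:restriction} to reduce the comparison of worst-case revenues to a comparison of expected revenues over the restricted class $\mathcal{Q}^*$ of decreasing rearrangements. By Theorem~\ref{thm:restriction} applied to each auction,
\[
\mathcal{R}(t^X) = \min_{Q^* \in \mathcal{Q}^{X*}} \iint_{\Theta^2} S^X \, dQ^*, \qquad \mathcal{R}(t^Y) = \min_{Q^* \in \mathcal{Q}^{Y*}} \iint_{\Theta^2} S^Y \, dQ^*,
\]
where $S^X(\theta,\theta') := t^X_1(\theta,\theta')+t^X_2(\theta',\theta)$ and similarly for $S^Y$. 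A subtlety I would flag immediately is that $\mathcal{Q}^{X*}$ and $\mathcal{Q}^{Y*}$ are a priori different, since the relevant decreasing rearrangements are defined relative to the ordering induced by each auction's own total transfer. However, since the order in Definition~\ref{def:rearrangement}(ii) is the fixed componentwise order on $\Theta^2$ (not an order derived from $S^X$ or $S^Y$), both sets coincide with the single set $\mathcal{Q}^* = \{ Q \in \mathcal{Q} : dQ/dP \text{ decreases in each argument} \}$. This is the key structural fact that makes the two minimizations range over the same index set.

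Given that both worst-case revenues are minima over the common set $\mathcal{Q}^*$, it suffices to show the pointwise inequality
\[
\iint_{\Theta^2} S^X \, dQ^* \;\geq\; \iint_{\Theta^2} S^Y \, dQ^* \qquad \text{for every } Q^* \in \mathcal{Q}^*,
\]
since this implies $\min_{Q^*} \iint S^X \, dQ^* \geq \min_{Q^*} \iint S^Y \, dQ^*$. Fixing an arbitrary $Q^* \in \mathcal{Q}^*$ with decreasing density $q := dQ^*/dP$, I would write the difference of expected revenues as an integral against $P$:
\[
\iint_{\Theta^2} (S^X - S^Y) \, dQ^* = \iint_{\Theta^2} \bigl[ S^X(\theta,\theta') - S^Y(\theta,\theta') \bigr] \, q(\theta,\theta') \, P(d\theta,d\theta').
\]
The aim is to show this is nonnegative, using WSCC to control the sign of the integrand's first factor and RRC to control the mass under $P$.

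The core of the argument is a single-crossing/covariance estimate carried out fiberwise. I would condition on bidder $i$'s type and integrate in the competitor's type $\theta'$. For each fixed $i$ and $\theta$, WSCC gives a threshold $\hat\theta$ such that $t^X_i(\theta,\theta') - t^Y_i(\theta,\theta')$ is nonnegative for $\theta' < \hat\theta$ and nonpositive for $\theta' > \hat\theta$, while the density $q(\theta,\theta')$ is decreasing in $\theta'$. The standard single-crossing trick is to compare $q(\theta,\theta')$ against its value at the crossing point $q(\theta,\hat\theta)$: writing
\[
\int_\Theta \bigl[ t^X_i - t^Y_i \bigr] q \, P(d\theta'|\theta) = \int_\Theta \bigl[ t^X_i - t^Y_i \bigr] \bigl( q - q(\theta,\hat\theta) \bigr) P(d\theta'|\theta) + q(\theta,\hat\theta) \int_\Theta \bigl[ t^X_i - t^Y_i \bigr] P(d\theta'|\theta),
\]
the first term is a product of two factors that are always of the same sign (both nonnegative below $\hat\theta$, both nonpositive above), hence its integrand is nonnegative, and the second term is nonnegative by RRC together with $q(\theta,\hat\theta) \geq 0$. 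Here I must be careful that $q$ is a function of both arguments and may not factor, so the ``marginal'' weighting of $P$ requires disintegrating $P$ into the conditional $P(\cdot|\theta)$ and the marginal over $\theta$; summing over the two bidders and integrating the resulting nonnegative fiberwise quantity against the marginal of $\theta$ then yields the claim. The main obstacle I anticipate is precisely this interplay between the joint (non-product) density $q$ and the conditional form of RRC: I expect to need the monotonicity of $q$ in the second argument to make the crossing-point comparison valid, and to verify that the symmetrization across $i=1,2$ correctly reconstructs the total transfer $S^X - S^Y$ so that the two per-bidder fiber integrals combine into the full double integral above.
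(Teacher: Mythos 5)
Your proposal is correct and takes essentially the same route as the paper's proof: both reduce via Theorem~\ref{thm:restriction} to minimization over the common set $\mathcal{Q}^*$ (your observation that $\mathcal{Q}^*$ is auction-independent is exactly right), and then run a fiberwise single-crossing estimate comparing the decreasing conditional density against its value at the crossing point $\hat\theta$, with RRC supplying the final nonnegativity. Your decomposition subtracting $q(\theta,\hat\theta)$ is algebraically identical to the paper's bounding of the positive and negative parts $[t^X_i-t^Y_i]^{\pm}$ by $q^*(\hat\theta)$ times their $P(\cdot|\theta)$-integrals.
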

\begin{proof}
See Appendix \ref{appen:methodology}.
\end{proof}

\begin{figure} [t]
\centering
\includegraphics[scale=1]{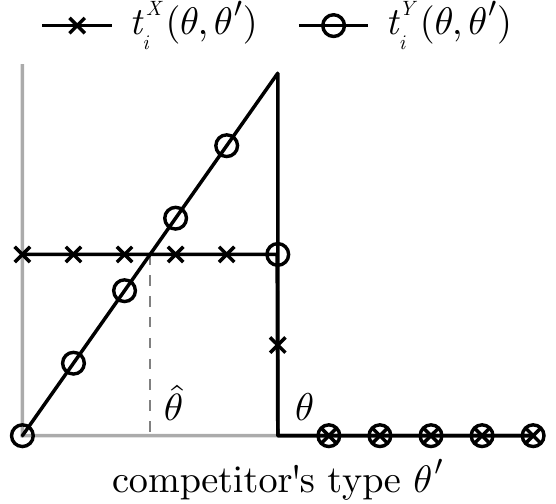}
\caption{\textbf{WSCC (X: first-price auction / Y: second-price auction).} The ``x''-ed and circled lines represent the transfer functions of bidder $i$ with type $\theta$ in $X$ and $Y$, respectively. The horizontal axis represents the competitor's type $\theta'$. Because $t^X_i(\theta, \theta')$ lies weakly above $t^Y_i(\theta, \theta')$ for $\theta' < \hat \theta$ and the opposite holds for $\theta' > \hat \theta$, the pair $(X, Y)$ satisfies WSCC.} \label{fig:single_crossing_condition}
\end{figure}

The intuition of Theorem \ref{thm:methodology} is as follows. To prove this theorem, we establish the following inequality: for all $Q^* \in \mathcal{Q}^*$, $i$ and $\theta$,
\begin{equation} \label{eq:XY_interim_revenue_comparison}
\int_\Theta t^X_i(\theta, \theta') Q^*(d\theta'|\theta) \geq \int_\Theta t^Y_i(\theta, \theta') Q^*(d\theta'|\theta),
\end{equation}
which implies
\[
\min_{Q^* \in \mathcal{Q}^*} \iint_{\Theta^2} [t^X_1(\theta, \theta')+t^X_2(\theta', \theta)] dQ^* \geq \min_{Q^* \in \mathcal{Q}^*} \iint_{\Theta^2} [t^Y_1(\theta, \theta')+t^Y_2(\theta', \theta)] dQ^*.
\]
Then, by Theorem \ref{thm:restriction}, $X$ generates a higher worst-case revenue than $Y$. To show inequality \eqref{eq:XY_interim_revenue_comparison}, let $Q^* \in \mathcal{Q}^*$ be given. By WSCC, bidder $i$ of type $\theta$ pays a greater (or smaller) amount in $X$ than in $Y$ against a competitor with low types (or high types). However, a decreasing rearrangement $Q^*$ overweights the likelihood of low types and underweights that of high types relative to $P$. Thus, $Q^*$ overweights the event that the bidder's payment is greater in $X$ than in $Y$, and underweights the opposite event. Because $X$ yields at least as high interim expected revenues as $Y$ under $P$ by RRC, $X$ yields higher interim expected revenues than $Y$ under $Q^*$. This establishes the desired inequality \eqref{eq:XY_interim_revenue_comparison}.

As mentioned in the introduction, WSCC is a weak form of the \textit{single-crossing condition} familiar from auction theory \citep[][Ch. 4]{Mil04}. Recall that $t^X$ and $t^Y$ satisfy the \textit{single-crossing condition (SCC)} if for all $i$, $\theta$ and $\theta'>\theta''$,
\begin{equation} \label{eq:scc}
    \left\{ \begin{array} {l}
    t_i^X(\theta, \theta'') \leq t_i^Y(\theta, \theta'') \implies t_i^X(\theta, \theta') \leq t_i^Y(\theta, \theta') \\
    t_i^X(\theta, \theta'') < t_i^Y(\theta, \theta'') \implies t_i^X(\theta, \theta') < t_i^Y(\theta, \theta').
    \end{array} \right.
\end{equation}
The first line means that if $t_i^X(\theta, \cdot)$ lies weakly below $t_i^Y(\theta, \cdot)$ at some point $\theta''$, then the same holds at every higher point $\theta'$; the second line is interpreted similarly. Now, WSCC turns out to be equivalent to the following condition (Appendix \ref{appen:scp}): for all $i$, $\theta$ and $\theta'>\theta''$,
\begin{equation} \label{eq:scc_weak}
    t_i^X(\theta, \theta'') < t_i^Y(\theta, \theta'') \implies t_i^X(\theta, \theta') \leq t_i^Y(\theta, \theta').
\end{equation}
This means that if $t_i^X(\theta, \cdot)$ lies \textit{strictly} below $t_i^Y(\theta, \cdot)$ at some point $\theta''$, then $t_i^X(\theta, \cdot)$ lies \textit{weakly} below $t_i^Y(\theta, \cdot)$ at every higher point $\theta'$. It is evident that condition \eqref{eq:scc_weak} is implied by condition \eqref{eq:scc}; hence the name WSCC. Figure \ref{fig:single_crossing_condition} illustrates an example that satisfies WSCC but not SCC.\footnote{Panels B and C of Figure \ref{fig:payment_schedule_comparison} illustrate examples satisfying SCC. Also, Panel D of Figure \ref{fig:payment_schedule_comparison} illustrates another example satisfying WSCC but not SCC.} Like SCC, WSCC requires that $t_i^X(\theta, \cdot)$ crosses $t_i^Y(\theta, \cdot)$ at most once and from above (the point $\hat \theta$). However, WSCC is weaker than SCC in that it allows the two transfer functions to touch outside the crossing point (the interval $[\theta, 1]$).

\subsection{Proof of Theorem \ref{thm:restriction}} \label{subsec:methodology_proof}

This section presents the proof of Theorem \ref{thm:restriction}. We first consider the case of Assumption \ref{assum:Q}, and then Assumption \ref{assum:Q_indep_IID}.

\medskip

\noindent \textit{Case A: $\mathcal{Q}$ satisfies Assumption \ref{assum:Q}.} To prove Theorem \ref{thm:restriction}, we use Proposition \ref{prop:anti_comonotone}, a variant of the Hardy-Littlewood rearrangement inequality \citep{Har59}. Although its proof mostly relies on existing literature \citep[e.g.,][]{Lux67, Follmer16}, we include the proof for completeness.

\begin{proposition} \label{prop:anti_comonotone} 

Let $T: \Theta^2 \rightarrow \mathbb{R}_+$ be measurable and $Q \in \Delta(\Theta^2, P)$. Suppose
\begin{equation} \label{eq:T_atomless}
    P \{ (\theta, \theta'): T(\theta, \theta')=c \} = 0 \quad \text{for all $c \in \mathbb{R}_+$.}
\end{equation}

\noindent (i) There exists a rearrangement $Q_T \in \Delta(\Theta^2, P)$ of $Q$ such that for all $\theta, \theta', \varphi, \varphi' \in \Theta$,
\begin{equation} \label{eq:anti_comonotone_strong}
\left\{ \begin{array}{lll}
    T(\theta, \theta') < T(\varphi, \varphi') &\implies &\frac{dQ_T}{dP}(\theta, \theta') \geq \frac{dQ_T}{dP}(\varphi, \varphi')  \\
    T(\theta, \theta') > T(\varphi, \varphi') &\implies &\frac{dQ_T}{dP}(\theta, \theta') \leq \frac{dQ_T}{dP}(\varphi, \varphi') \\ T(\theta, \theta') = T(\varphi, \varphi') &\implies &\frac{dQ_T}{dP}(\theta, \theta') = \frac{dQ_T}{dP}(\varphi, \varphi').
\end{array} \right.
\end{equation}

\noindent (ii) Moreover, the expectation of $T$ is lower under $Q_T$ than under $Q$:
\[
\iint_{\Theta^2} T(\theta, \theta') Q_T(d\theta, d\theta') \leq \iint_{\Theta^2} T(\theta, \theta') Q(d\theta, d\theta').
\]

\end{proposition}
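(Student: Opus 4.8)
The plan is to prove both parts by reducing the two-dimensional problem to a one-dimensional rearrangement driven by the level sets of $T$, using the atomlessness condition \eqref{eq:T_atomless} throughout. Write $g := dQ/dP \in L^1_+(P)$, and let $\lambda := T_* P$ be the law of $T$ under $P$, a measure on $\mathbb{R}_+$ whose distribution function $\Lambda(t) := P\{ T \leq t \}$ is continuous precisely because of \eqref{eq:T_atomless}. Let $G(c) := P\{ g \leq c \}$ be the distribution function of $g$ under $P$, and $G^{-1}(p) := \inf \{ c : G(c) \geq p \}$ its left-continuous quantile function.

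For part (i), I would set $h(t) := G^{-1}(1 - \Lambda(t))$ and define $Q_T$ by $dQ_T/dP := g_T := h \circ T$. Since $\Lambda$ is increasing and $G^{-1}$ is increasing, $h$ is decreasing, which yields the anti-comonotonicity \eqref{eq:anti_comonotone_strong} at once: $g_T = h(T)$ is a function of $T$ alone, so a strict ordering of $T$-values reverses into the opposite (weak) ordering of $g_T$-values, while equal $T$-values force equal $g_T$-values. To see that $Q_T$ is a genuine rearrangement of $Q$, note that continuity of $\Lambda$ makes $\Lambda(T)$ uniform on $[0,1]$ (the probability integral transform), hence so is $1 - \Lambda(T)$; therefore $g_T = G^{-1}(1 - \Lambda(T))$ has distribution function $G$ under $P$, the same as $g$. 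This gives $P\{ g_T \leq c \} = P\{ g \leq c \}$ for all $c$, which is condition \eqref{eq:def_rearrangement}, and in particular $\int g_T \, dP = \int g \, dP = 1$, so $Q_T \in \Delta(\Theta^2, P)$ is a probability measure.

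For part (ii), I would prove the sharper statement that $\int T g_T \, dP \leq \int T \tilde g \, dP$ for \emph{every} rearrangement $\tilde g$ of $g$, and then specialize to $\tilde g = g$. The tools are the layer-cake identity and a bathtub argument. Writing $\tilde g = \int_0^\infty \mathbf{1}\{ \tilde g > r \} \, dr$ and applying Tonelli gives $\int T \tilde g \, dP = \int_0^\infty \big( \int_{\{ \tilde g > r \}} T \, dP \big) \, dr$, and since $\tilde g$ is a rearrangement, each super-level set $\{ \tilde g > r \}$ has the fixed $P$-measure $m(r) := P\{ g > r \}$. The bathtub principle then asserts that among all measurable sets of $P$-measure $m(r)$, the integral $\int_A T \, dP$ is minimized by a sublevel set $\{ T \leq \tau_r \}$, where atomlessness of $T$ provides a $\tau_r$ with $P\{ T \leq \tau_r \} = m(r)$; the one-line proof compares $A$ with $A^* = \{ T \leq \tau_r \}$ on their symmetric difference, using $T \geq \tau_r$ on $A \setminus A^*$, $T \leq \tau_r$ on $A^* \setminus A$, and $P(A \setminus A^*) = P(A^* \setminus A)$. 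Because $g_T = h(T)$ with $h$ decreasing, each super-level set $\{ g_T > r \}$ is itself a sublevel set $\{ T \leq \tau_r \}$ (up to a $P$-null set) of the correct measure $m(r)$, so it attains the bathtub minimum. Comparing integrands pointwise in $r$ and integrating yields the claim.

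The main obstacle I anticipate is the measure-theoretic bookkeeping around the generalized inverse $G^{-1}$ and the non-strict monotonicity of $h$: flat stretches of $h$ (equivalently, plateaus of $\Lambda$ or atoms in the law of $g$) must be handled so that the super-level sets $\{ g_T > r \}$ really do coincide, modulo $P$-null sets, with sublevel sets of $T$ of the prescribed measure, and so that the probability integral transform survives possible flat pieces of $\Lambda$. These are exactly the points at which \eqref{eq:T_atomless} is indispensable, and they are dealt with in the rearrangement literature \citep{Lux67, Follmer16}; the remaining content, the bathtub comparison, is elementary and self-contained.
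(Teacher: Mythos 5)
Your proof is correct, and part (i) is essentially the paper's own construction: your $g_T = G^{-1}(1-\Lambda(T))$ is exactly the paper's $dQ_T/dP = F_{dQ/dP}^{-1}\bigl(1-F_T(T)\bigr)$, with the probability integral transform playing the role of the paper's equation \eqref{eq:cdf_dist_unif} and the quantile-transform step replacing the citation of F\"ollmer--Schied's Lemma A.23 (left- versus right-continuous inverse is immaterial modulo $P$-null sets). Where you diverge is part (ii): the paper simply invokes the upper and lower Hardy--Littlewood inequalities (F\"ollmer--Schied, Theorem A.28), comparing both $\iint T\,dQ_T$ and $\iint T\,dQ$ to the common quantile integral $\int_0^1 F_T^{-1}(1-z)\,F_{dQ/dP}^{-1}(z)\,dz$ and using $F_{dQ_T/dP}=F_{dQ/dP}$, whereas you unpack that inequality from scratch via the layer-cake identity and the bathtub principle, in the process proving the slightly stronger statement that $Q_T$ minimizes $\iint T\,d\tilde Q$ over \emph{all} rearrangements $\tilde Q$ of $Q$. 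The trade-off: the paper's route is shorter by outsourcing the analytic core to a standard reference, while yours is self-contained and makes visible exactly where \eqref{eq:T_atomless} is used --- continuity of $\Lambda$ for the integral transform, existence of the bathtub levels $\tau_r$ with $P\{T\le\tau_r\}=m(r)$, and the identification (mod null sets) of the superlevel sets $\{g_T>r\}$ with sublevel sets of $T$, the last point being exactly why the monotone-function structure $g_T=h(T)$ with $h$ decreasing (and not mere anti-comonotonicity) matters. The residual bookkeeping you flag around flat stretches of $h$ is genuinely harmless here, for the reasons you give, so deferring it to the rearrangement literature is no worse than the paper's own reliance on the same sources.
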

\begin{proof}
See Appendix \ref{appen:anti_comonotone}.
\end{proof}

Panel C of Figure \ref{fig:rearrangement_ineq_intuition} illustrates $Q_T$. Intuitively, $Q_T$ rearranges $Q$ such that $dQ_T/dP$ varies in exactly the opposite direction to $T$. More precisely, every upper contour set of $dQ_T/dP$ coincides with some lower contour set of $T$. This means that $Q_T$ assigns high probabilities to low values of $T$ and low probabilities to high values of $T$. Thus, the expectation of $T$ is lower under $Q_T$ than under $Q$.

Condition \eqref{eq:anti_comonotone_strong} is a slightly stronger version of the condition known as \textit{anti-comonotonicity} \citep[e.g.,][]{Ghossoub15}, which requires the following: 
\begin{equation} \label{eq:anti_comonotone_standard}
(T(\theta, \theta')-T(\varphi, \varphi')) \cdot \left( \frac{dQ_T}{dP}(\theta, \theta') - \frac{dQ_T}{dP}(\varphi, \varphi') \right) \leq 0.
\end{equation}
It is straightforward to verify that the first two lines of condition \eqref{eq:anti_comonotone_strong} are equivalent to condition \eqref{eq:anti_comonotone_standard}; hence, condition \eqref{eq:anti_comonotone_strong} implies condition \eqref{eq:anti_comonotone_standard}.

Then, Theorem \ref{thm:restriction} follows immediately from Proposition \ref{prop:anti_comonotone}.

\begin{proof} [Proof of Theorem \ref{thm:restriction} under Assumption \ref{assum:Q}]

Let $T(\theta, \theta') := t_1(\theta, \theta') + t_2(\theta', \theta)$. For a given $Q \in \mathcal{Q}$, define $Q_T$ as in Proposition \ref{prop:anti_comonotone} (i) (note that condition \eqref{eq:T_atomless} holds by Assumption \ref{assum:t} (ii)). Since $T$ increases in each argument by Assumption \ref{assum:t} (i), condition \eqref{eq:anti_comonotone_strong} implies that $dQ_T/dP$ decreases in each argument.\footnote{In contrast, the standard anti-comonotonicity condition \eqref{eq:anti_comonotone_standard} does not imply monotonicity of $dQ_T/dP$. For example, if $T$ is constant, then condition \eqref{eq:anti_comonotone_standard} is trivially satisfied.} Hence, $Q_T$ is a decreasing rearrangement of $Q$, i.e., $Q_T \in \mathcal{Q}^*$. Also, by Proposition \ref{prop:anti_comonotone} (ii), the expected revenue under $Q^*$ is lower than that under $Q$. Thus, the minimum expected revenue over $\mathcal{Q}$ equals that over $\mathcal{Q}^*$.

\end{proof}

\begin{figure} [t]
\centering
\includegraphics[scale=1]{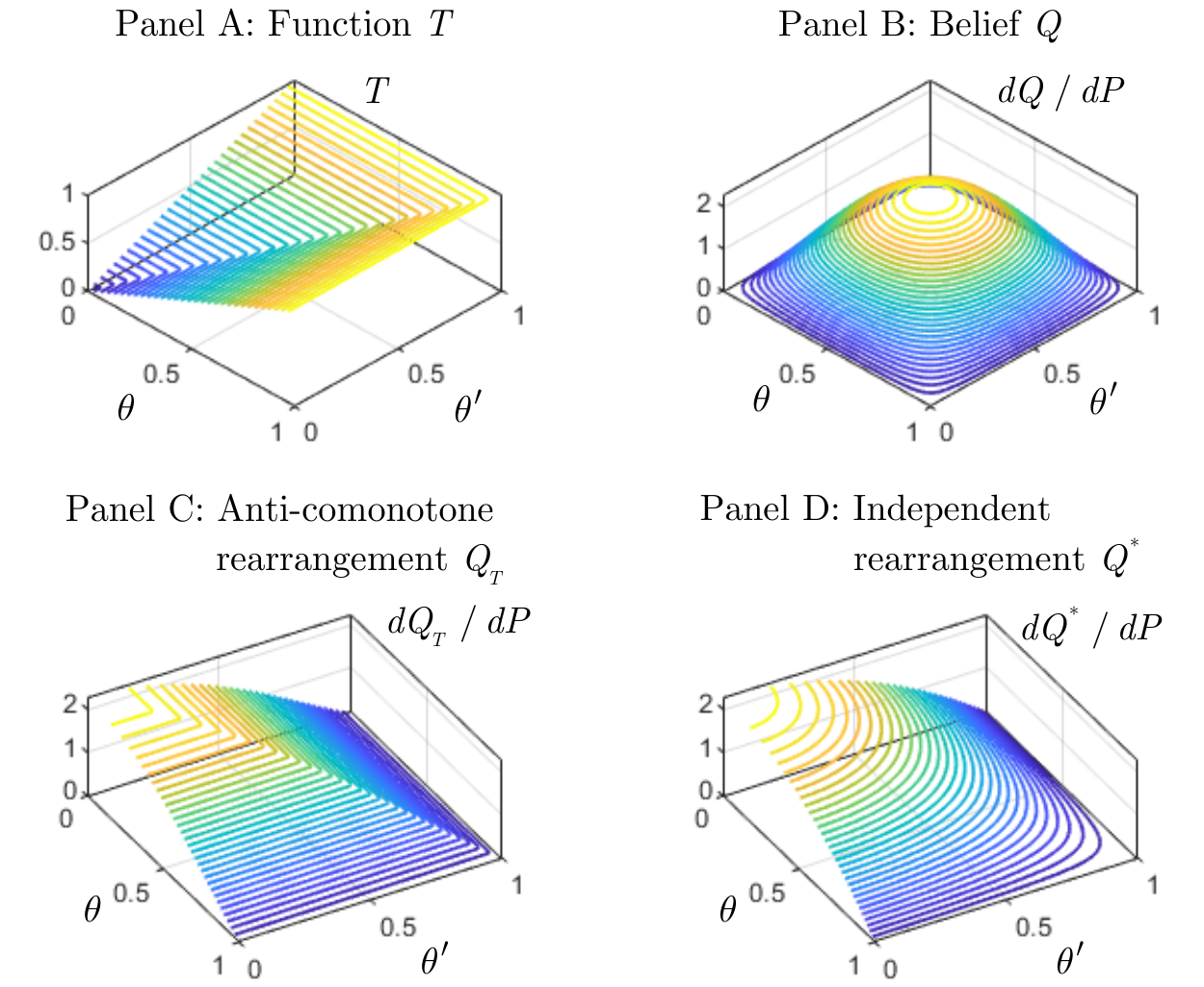}
\caption{\textbf{Rearrangements in Proposition \ref{prop:anti_comonotone} and \ref{prop:rearrangement_ineq_IID}.} Let $P$ be uniform over $\Theta^2$. Suppose $T: \Theta^2 \rightarrow \mathbb{R}_+$ is given as in Panel A ($T(\theta, \theta') := \max \{\theta, \theta' \}$), and $Q \in \Delta(\Theta^2, P)$ as in Panel B. Then, Panel C shows the anti-comonotone rearrangement $Q_T$ (Proposition \ref{prop:anti_comonotone}), and Panel D shows the independent and decreasing rearrangement $Q^*$ (Proposition \ref{prop:rearrangement_ineq_IID}).} \label{fig:rearrangement_ineq_intuition}
\end{figure}

\noindent \textit{Case B: $\mathcal{Q}$ satisfies Assumption \ref{assum:Q_indep_IID}.} In this case, Theorem \ref{thm:restriction} does not follow from Proposition \ref{prop:anti_comonotone}. This is because even if $Q$ is independent (or IID), the anti-comonotone rearrangement $Q_T$ in Proposition \ref{prop:anti_comonotone} is not necessarily independent (or IID). However, Proposition \ref{prop:rearrangement_ineq_IID} shows that given an independent (or IID) belief, we can construct a rearrangement satisfying statements similar to Proposition \ref{prop:anti_comonotone} while preserving the independence (or IID) property.

\begin{proposition} \label{prop:rearrangement_ineq_IID}
Let $T: \Theta^2 \rightarrow \mathbb{R}_+$ be increasing in each argument and $Q \in \Delta(\Theta^2, P)$.

\noindent (i) If $P$ and $Q$ are independent, there exists an independent and decreasing rearrangement $Q^*$ of $Q$. Furthermore, if $P$ and $Q$ are IID, then $Q^*$ is IID.

\noindent (ii) Moreover, the expectation of $T$ is lower under $Q^*$ than under $Q$:
\[
\iint_{\Theta^2} T(\theta, \theta') Q^*(d\theta, d\theta') \leq \iint_{\Theta^2} T(\theta, \theta') Q(d\theta, d\theta').
\]

\end{proposition}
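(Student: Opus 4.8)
The plan is to reduce the two-dimensional problem to one-dimensional rearrangements of the marginals and then take their product, exploiting the independence of $P$ and $Q$ at every step. Write $P = P_1 \times P_2$ and $Q = Q_1 \times Q_2$, and let $f_i := dQ_i/dP_i$, so that $dQ/dP(\theta, \theta') = f_1(\theta) f_2(\theta')$. For each $i$ I would first build a one-dimensional decreasing rearrangement $f_i^*$ of $f_i$ with respect to $P_i$. Since $P$ has a positive density, the marginal CDF $\Phi_i(x) := P_i([0,x])$ is a strictly increasing homeomorphism of $[0,1]$ that pushes $P_i$ forward to the uniform measure; taking the classical (Lebesgue) decreasing rearrangement of $f_i \circ \Phi_i^{-1}$ and pulling it back through $\Phi_i$ produces a decreasing function $f_i^*$ that is $P_i$-equimeasurable with $f_i$. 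I then define $Q_i^*$ by $dQ_i^*/dP_i = f_i^*$ and set $Q^* := Q_1^* \times Q_2^*$.

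For part (i), $Q^*$ is independent by construction, and $dQ^*/dP(\theta, \theta') = f_1^*(\theta) f_2^*(\theta')$ is a product of nonnegative decreasing functions, hence decreasing in each argument, so the monotonicity requirement \eqref{eq:def_dec_rearrangement} holds. The substantive point is to check that $Q^*$ is a genuine rearrangement of $Q$ on $\Theta^2$, i.e.\ that $f_1^* \otimes f_2^*$ and $f_1 \otimes f_2$ are $P$-equimeasurable in the sense of \eqref{eq:def_rearrangement}. Here independence is essential: under $P$ the coordinates $\theta$ and $\theta'$ are independent, so $(f_1(\theta), f_2(\theta'))$ consists of independent random variables whose laws are the $P_1$-law of $f_1$ and the $P_2$-law of $f_2$; since $f_i^*$ has the same $P_i$-law as $f_i$, the pair $(f_1^*(\theta), f_2^*(\theta'))$ has the same joint law, and applying the continuous map $(x,y) \mapsto xy$ shows that $f_1(\theta)f_2(\theta')$ and $f_1^*(\theta)f_2^*(\theta')$ have the same $P$-distribution. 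Finally, in the IID case $P_1 = P_2$ and $Q_1 = Q_2$ force $f_1 = f_2$, so the identical construction gives $f_1^* = f_2^*$ and hence $Q^*$ is IID.

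For part (ii), I would change one marginal at a time and invoke the one-dimensional Hardy--Littlewood inequality \citep{Har59}. Introduce the intermediate measure $\tilde Q := Q_1^* \times Q_2$. By Tonelli (legitimate since $T$ is nonnegative, measurable, and bounded by $T(1,1)$), fix $\theta'$ and note that $T(\cdot, \theta')$ is increasing while $f_1^*$ is the decreasing rearrangement of $f_1$, so $\int_\Theta T(\theta, \theta') f_1^*(\theta)\, P_1(d\theta) \leq \int_\Theta T(\theta, \theta') f_1(\theta)\, P_1(d\theta)$; integrating against the nonnegative weight $f_2(\theta')\,P_2(d\theta')$ gives $\iint T\, d\tilde Q \leq \iint T\, dQ$. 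Repeating the argument in the second coordinate---now fixing $\theta$, using that $T(\theta, \cdot)$ is increasing and $f_2^*$ is the decreasing rearrangement of $f_2$, and integrating against $f_1^*(\theta)\,P_1(d\theta) \geq 0$---yields $\iint T\, dQ^* \leq \iint T\, d\tilde Q$. Chaining the two inequalities delivers the claim.

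I expect the main obstacle to be the equimeasurability verification in part (i): forming the product of one-dimensional rearrangements does \emph{not} in general produce a rearrangement on the product space, and the argument genuinely hinges on the fact that the law of a product of two \emph{independent} random variables depends only on their marginal laws. This is exactly the structural feature that the anti-comonotone rearrangement of Proposition \ref{prop:anti_comonotone} fails to preserve, which is why the independent and IID cases require this separate construction. A secondary point to handle carefully is defining the one-dimensional decreasing rearrangement relative to the non-uniform measure $P_i$, which the change of variables through $\Phi_i$ reduces to the standard Lebesgue case.
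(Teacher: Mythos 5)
Your proposal is correct, and the interesting divergence from the paper is in part (ii). For part (i) you and the paper do essentially the same thing: the paper applies the one-dimensional version of Proposition \ref{prop:anti_comonotone}(i) with $T(\theta) \equiv \theta$ to each marginal $Q_i$ and takes the product, which produces exactly your pullback-through-the-CDF object $f_i^* = (f_i \circ \Phi_i^{-1})^\downarrow \circ \Phi_i$; your added verification that $f_1^* \otimes f_2^*$ is $P$-equimeasurable with $f_1 \otimes f_2$ on the product space (same $P_i$-laws plus independence of the coordinates under $P$ imply the same joint law of the pair, hence the same law of the product) is a step the paper asserts without proof, and your justification is the right one --- this is precisely where independence of $P$ does real work. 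For part (ii), however, you take a genuinely different route: the paper uses the Layer Cake Representation to reduce to indicators $T = \mathbf{1}_U$ and $dQ/dP = \mathbf{1}_{A_1 \times A_2}/P(A_1 \times A_2)$, and then proves the resulting set-level inequality (Lemma \ref{lem:rearrangement_ineq_simple}) by constructing the measure-transport maps $m_i(\theta) = P_i([0,\theta] \cap A_i)$ that push $Q$ forward to $Q^*$ while moving mass toward the lower corner. You instead swap one marginal at a time through the hybrid measure $\tilde Q = Q_1^* \times Q_2$, invoking for each fixed value of the other coordinate the one-dimensional lower Hardy--Littlewood bound (anti-comonotone pairing minimizes $\int g\,h\,d\mu$ over rearrangements of $h$ when $g$ is monotone), and chain the two inequalities via Tonelli. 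Your argument is shorter and more elementary: it needs only the 1D inequality the paper itself already uses (Thm. A.28 of \citealp{Follmer16}, applied in the proof of Proposition \ref{prop:anti_comonotone}(ii)), and it exploits directly that $T$ is increasing in each argument separately; the paper's route is heavier but isolates Lemma \ref{lem:rearrangement_ineq_simple}, a geometric inequality in the style of classical rearrangement proofs that is of some independent interest. Both proofs ultimately rest on the structural point you flag at the end: products of marginal rearrangements are rearrangements of the product measure only because of independence, which is exactly why the anti-comonotone construction of Proposition \ref{prop:anti_comonotone} cannot be used under Assumption \ref{assum:Q_indep_IID}. (One cosmetic remark: boundedness of $T$ by $T(1,1)$ is not needed for your Tonelli steps; nonnegativity and measurability suffice.)
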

\begin{proof}
See Appendix \ref{appen:rearrangement_ineq_IID}.
\end{proof}

Panel D of Figure \ref{fig:rearrangement_ineq_intuition} illustrates $Q^*$. Unlike the anti-comonotone rearrangement $Q_T$ in Proposition \ref{prop:anti_comonotone}, the upper contour sets of $dQ^*/dP$ do not coincide with the lower contour sets of $T$. However, the upper contour sets of $dQ^*/dP$ have greater intersections with the lower contour sets of $T$ than the upper contour sets of $dQ/dP$ have. This means that $Q^*$ assigns greater probabilities to low values of $T$ and smaller probabilities to high values of $T$ than $Q$ does. Hence, the expectation of $T$ is lower under $Q^*$ than under $Q$.

The proof of Theorem \ref{thm:restriction} under Assumption \ref{assum:Q_indep_IID} is similar to that under Assumption \ref{assum:Q}, and hence omitted.

\section{Comparison between commonly studied auctions} \label{sec:ranking_4}

In this section, assuming that the reference belief $P$ is IID, we apply Theorem \ref{thm:methodology} to compare the worst-case revenues of four commonly studied auctions: the first-price auction (I), second-price auction (II), all-pay auction (A) and (static) war of attrition (W). For simplicity, we assume no reserve price; however, the extension to reserve prices is straightforward.

Since the bidders are assumed to be ambiguity neutral, the equilibrium bidding strategies and transfer functions for the four auctions are given as follows \citep[see, e.g.,][]{Mil04}:
\[
\begin{array} {ll}
b^I(\theta) := \theta - \int_{0}^\theta \frac{F(z)}{F(\theta)} dz &\;\;\, t^I_i(\theta, \theta') := b^I(\theta) \cdot ( \mathbf{1}[\theta>\theta'] + \frac{1}{2} \mathbf{1}[\theta=\theta'] ) \\
b^{II}(\theta) := \theta &\;\;\, t^{II}_i(\theta, \theta') := b^{II}(\theta') \cdot ( \mathbf{1}[\theta>\theta'] + \frac{1}{2} \mathbf{1}[\theta=\theta'] ) \\
b^A(\theta) := \theta F(\theta) - \int_{0}^\theta F(z) dz  &\;\;\, t^A_i(\theta, \theta') := b^A(\theta) \\
b^W(\theta) := \int_{0}^\theta \frac{z f(z)}{1-F(z)} dz &\;\;\, t^W_i(\theta, \theta') := b^W(\theta') \mathbf{1}[\theta > \theta'] + b^W(\theta) \mathbf{1}[\theta \leq \theta'],
\end{array}
\]
where $F(z) := P \{ (\theta, \theta'): \theta \leq z, \theta' \in \Theta \}$ denotes the marginal cumulative distribution of $P$ and $f(z):=F'(z)$ denotes its marginal probability density.

Theorem \ref{thm:ranking}, the main result of this section, establishes the worst-case revenue rankings between the four auctions (Figure \ref{fig:summary}).

\begin{figure} [t]
\centering
\includegraphics[scale=1]{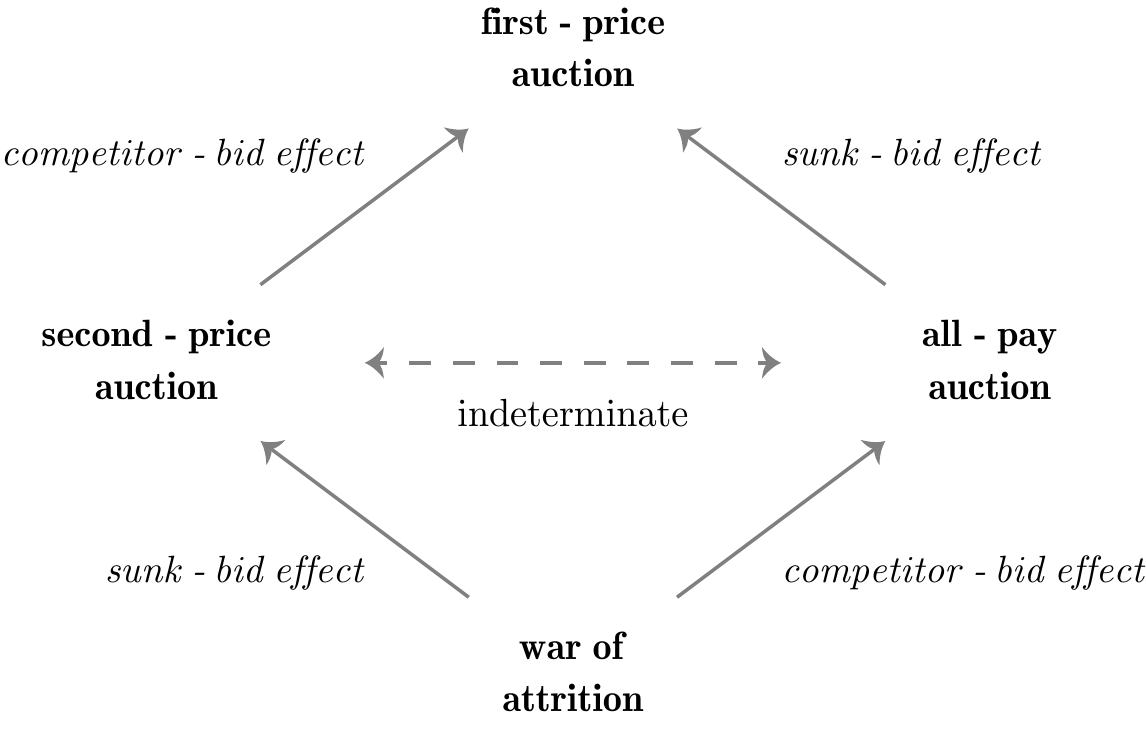}
\caption{\textbf{Theorem \ref{thm:ranking}.} Arrows indicate the direction in which the worst-case revenue increases.} \label{fig:summary}
\end{figure}

\begin{theorem} \label{thm:ranking}

Suppose $\mathcal{Q}$ satisfies Assumption \ref{assum:Q} or \ref{assum:Q_indep_IID}. If $P$ is IID and the bidders are ambiguity neutral, the following statements hold:

\noindent (i) $\mathcal{R}(t^I) \geq \mathcal{R}(t^{II})$. 

\noindent (ii) $\mathcal{R}(t^I) \geq \mathcal{R}(t^A)$. 

\noindent (iii) $\mathcal{R}(t^A) \geq \mathcal{R}(t^W)$.

\noindent (iv) Suppose that
\begin{equation} \label{eq:F_hazard}
\text{$\theta \times f(\theta) / [1-F(\theta)]$ increases in $\theta$.}
\end{equation}
Then, $\mathcal{R}(t^{II}) \geq \mathcal{R}(t^W)$. 

\end{theorem}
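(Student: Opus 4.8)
The plan is to obtain all four rankings from Theorem \ref{thm:methodology}, whose hypotheses are WSCC and RRC, by checking these two conditions for each relevant pair $(X,Y)$. I would first dispose of RRC, the easy half: since $P$ is IID, the conditional $P(d\theta'\mid\theta)$ is just the marginal $dF(\theta')$, so RRC compares interim expected payments under $P$. All four auctions share the same symmetric increasing equilibrium, allocate to the highest type, and leave the lowest type with zero expected payment, so by the Revenue Equivalence Principle their interim expected payments coincide; indeed a short computation gives $\int_\Theta t_i^{\bullet}(\theta,\theta')\,dF(\theta')=\int_0^\theta z f(z)\,dz$ for each of the four. Hence RRC holds with equality for every pair, and the whole theorem reduces to verifying WSCC, i.e.\ that for each $i,\theta$ the map $\theta'\mapsto t_i^X(\theta,\theta')-t_i^Y(\theta,\theta')$ is nonnegative then nonpositive.

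Next I would record the identities that make the transfers comparable. Integration by parts gives $b^A(\theta)=\int_0^\theta z f(z)\,dz$; multiplying the definition of $b^I$ by $F(\theta)$ gives $b^I(\theta)F(\theta)=b^A(\theta)$, hence $b^I(\theta)=b^A(\theta)/F(\theta)\ge b^A(\theta)$; and $b^W(\theta)=\int_0^\theta \tfrac{z f(z)}{1-F(z)}\,dz\ge b^A(\theta)$ because $1-F\le 1$. With $b^A$ and $b^W$ increasing, parts (i)--(iii) follow directly. For (i), on $\theta'<\theta$ the difference $t^I_i-t^{II}_i=b^I(\theta)-\theta'$ decreases and crosses zero once, while on $\theta'>\theta$ both payments vanish, so WSCC holds with $\hat\theta=b^I(\theta)$. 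For (ii), the difference equals the constant $b^I(\theta)-b^A(\theta)\ge0$ for $\theta'<\theta$ and $-b^A(\theta)\le0$ for $\theta'>\theta$, giving WSCC with $\hat\theta=\theta$. For (iii), on $\theta'<\theta$ the difference $b^A(\theta)-b^W(\theta')$ decreases from $b^A(\theta)\ge0$ to $b^A(\theta)-b^W(\theta)\le0$, and for $\theta'\ge\theta$ it equals that same nonpositive constant, again single crossing from above.

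The remaining work is (iv), the only part needing the extra hypothesis. For $(X,Y)=(II,W)$ the difference is $g(\theta'):=\theta'-b^W(\theta')$ on $\theta'<\theta$ and the nonpositive constant $-b^W(\theta)$ on $\theta'>\theta$, so WSCC amounts to $g$ being nonnegative then nonpositive on $[0,\theta)$. Here $g(0)=0$ and $g'(\theta')=1-\theta' f(\theta')/[1-F(\theta')]$, so condition \eqref{eq:F_hazard} makes $g'$ decreasing, i.e.\ $g$ concave; together with $g(0)=0$ and $g'(0)=1>0$, concavity forces $g$ to rise and then fall, crossing zero at most once from above. Thus $g$ has the required sign pattern and WSCC holds, with $\hat\theta$ the zero of $g$ in $[0,\theta]$ (or $\hat\theta=\theta$ if $g\ge0$ throughout). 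Applying Theorem \ref{thm:methodology} to each verified pair then delivers (i)--(iv).

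I expect the one genuine obstacle to be the single-crossing of $g$ in (iv): unlike (i)--(iii), where monotonicity of the underlying bid functions makes the crossing automatic, the sign of $\theta'-b^W(\theta')$ on $[0,\theta)$ can in principle change more than once, and it is precisely condition \eqref{eq:F_hazard} (via concavity of $g$) that rules this out. A minor point to clean up is the diagonal $\theta'=\theta$, where the tie-splitting conventions in $t^{I}$ and $t^{II}$ enter; since this set is $P$-null it affects neither WSCC in the relevant sense nor the worst-case integral.
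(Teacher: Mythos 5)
Your proposal is correct and follows the paper's overall strategy: reduce everything to Theorem \ref{thm:methodology}, dispose of RRC as an equality via Revenue Equivalence, and verify WSCC pair by pair; your thresholds for (i)--(iii) ($\hat\theta=b^I(\theta)$, $\hat\theta=\theta$, and the solution of $b^W(\hat\theta)=b^A(\theta)$) are exactly the paper's. Two execution details differ, both in your favor on economy. First, you get $b^W\ge b^A$ from $1-F\le 1$ directly, whereas the paper integrates $b^W$ by parts and invokes $-\log(1-z)>z$; the weak inequality is all WSCC needs, so your one-liner suffices. Second, and more substantively, in (iv) the paper builds a single \emph{global} crossing point $\theta^*$ of $b^W$ with the identity: it first proves $\lim_{\theta\to 0}\theta f(\theta)/[1-F(\theta)]=0$ by a contradiction argument (its Step 1), then combines convexity of $b^W$ with $\lim_{\theta\to 1}b^W(\theta)=\infty$ \citep{Kris97} to obtain $b^W\le\mathrm{id}$ below $\theta^*$ and $b^W\ge\mathrm{id}$ above, and sets $\hat\theta=\min\{\theta,\theta^*\}$. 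Your route through concavity of $g(\theta'):=\theta'-b^W(\theta')$ is leaner: a concave function with $g(0)=0$ has $\{g\ge 0\}$ a (convex) interval containing $0$, so $g$ is automatically nonnegative-then-nonpositive on $[0,\theta)$, and you never need the divergence of $b^W$ at $1$.

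The one loose end is your unproven assertion $g'(0)=1>0$, i.e.\ $\lim_{\theta'\to 0}\theta' f(\theta')/[1-F(\theta')]=0$. This is not obvious---the density may be unbounded near $0$---and it is precisely what the paper's Step 1 establishes by a nontrivial divergence argument. Fortunately your proof does not actually need it: if $g'(0^+)\le 0$, concavity gives $g\le 0$ on all of $[0,\theta)$ and $\hat\theta=0$ works, so you should either drop the claim or import the paper's Step 1. Finally, your dismissal of the diagonal $\theta'=\theta$ as $P$-null is legitimate, since WSCC enters the proof of Theorem \ref{thm:methodology} only through integrals against $Q^*(\cdot|\theta)\ll P(\cdot|\theta)$, which is atomless; but note that with your thresholds the diagonal inequalities in fact hold pointwise (e.g., in (iv) with $\hat\theta<\theta$, $g\le 0$ on $(\hat\theta,\theta)$ plus continuity give $b^W(\theta)\ge\theta>\theta/2$), which is how the paper handles them.
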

\begin{proof}
See Appendix \ref{appen:ranking}.
\end{proof}

\noindent Condition \eqref{eq:F_hazard} is a weak version of the usual assumption that the hazard rate $f/(1-F)$ is increasing. This condition guarantees that the equilibrium bidding strategies of the second-price auction and war of attrition, $b^{II}$ and $b^W$, intersect exactly once (except at the origin).\footnote{Condition \eqref{eq:F_hazard} can be weakened because Theorem \ref{thm:ranking} (iv) holds whenever $b^{II}$ and $b^W$ intersect exactly once (except at the origin).} For example, $F(\theta) \equiv \theta^\alpha$ satisfies condition \eqref{eq:F_hazard}, where $\alpha>0$.

Notably, the worst-case revenue rankings in Theorem \ref{thm:ranking} are opposite to the expected revenue rankings in the affiliated values setup \citep{Mil82, Kris97}. Section \ref{sec:linkage} discusses the relationship between the two in detail. Also, in the special case of the bounded likelihood ratio model (Example \ref{ex:Indep_IID} (b-IID)), \cite{Lo98} shows that the first-price auction outperforms the second-price auction. Theorem \ref{thm:ranking} (i) generalizes this result.\footnote{\cite{Lo98} also analyzes the case where both the seller and bidders have MMEU preferences, with the sets of priors given by the bounded likelihood ratio model. Likewise, this result is a special case of Corollary \ref{cor:bidder_aversion} (i) in Section \ref{subsec:bidder_aversion}.}

\begin{figure} [t]
\centering
\includegraphics[scale=1]{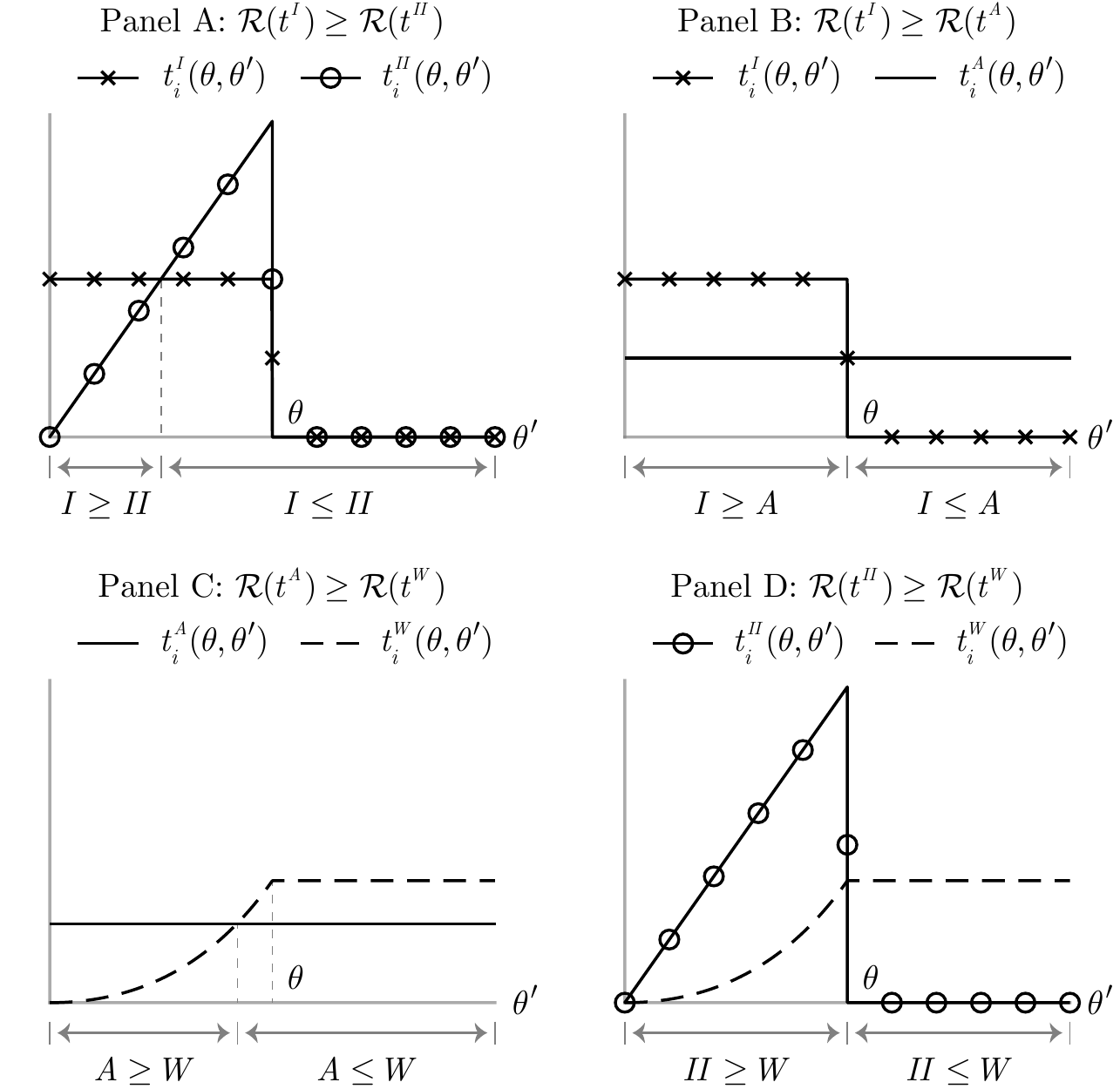}
\captionof{figure}{\textbf{Proof of Theorem \ref{thm:ranking}.} Each panel plots the transfer functions of a type $\theta$ bidder in two auctions. Horizontal axes represent the competitor's type $\theta'$.} \label{fig:payment_schedule_comparison}
\end{figure}

We now outline the proof of Theorem \ref{thm:ranking}. By Theorem \ref{thm:methodology}, to prove Theorem \ref{thm:ranking}, it is sufficient to verify that the pairs $(X, Y) = (I, II), (I, A), (A, W), (II, W)$ satisfy both WSCC and RRC. Figure \ref{fig:payment_schedule_comparison} illustrates that these pairs satisfy WSCC. Also, by the Revenue Equivalence Principle \citep{Myer81}, the four auctions yield the same interim expected revenues under $P$; hence, RRC holds as an equality. This establishes Theorem \ref{thm:ranking}.

Analogously to \cite{Kris97}, we identify two independent effects driving Theorem \ref{thm:ranking}. To explain this, recall from Theorem \ref{thm:methodology} that a negative (or positive) association between a bidder's payment and her competitor's type increases (or decreases) worst-case revenue. First, auctions in which a bidder pays the competitor's bid underperform auctions in which a bidder pays her own bid; we name this the \textit{competitor-bid effect} (Figure \ref{fig:summary}, arrows with upper-right directions). When a bidder pays the competitor's bid instead of her bid, a positive association between her payment and the competitor's type arises. According to Theorem \ref{thm:methodology}, this positive association decreases worst-case revenue. The competitor-bid effect explains why the second-price auction underperforms the first-price auction, and the war of attrition underperforms the all-pay auction.

Second, auctions in which bids are sunk underperform auctions in which payments are contingent on winning; we name this the \textit{sunk-bid effect} (Figure \ref{fig:summary}, arrows with upper-left directions). The logic is similar as in the previous paragraph: the fact that a bidder pays even when she loses---in which case the competitor's type is high---creates a positive association between her payment and the competitor's type. The sunk-bid effect explains why the all-pay auction underperforms the first-price auction, and the second-price auction underperforms the war of attrition.

\begin{figure} [t]
\centering
\includegraphics[scale=1]{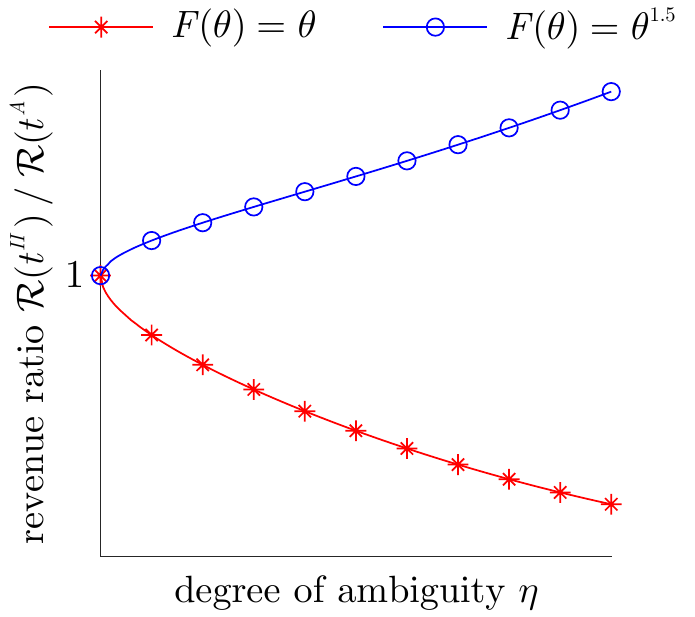}
\caption{\textbf{Indeterminacy between II and A.}  Let $\mathcal{Q}$ be the relative entropy neighborhood (Example \ref{ex:Q} (a)): $\mathcal{Q} := \{ Q \in \Delta(\Theta^2, P): \int \log (dQ/dP) dQ \leq \eta \}$. The figure plots the worst-case revenue ratio $\mathcal{R}(t^{II})/\mathcal{R}(t^{A})$, where the horizontal axis represents the degree of ambiguity $\eta$. The starred and circled lines represent the cases where $F(\theta)=\theta$ and $F(\theta)=\theta^{1.5}$. In the former, the second-price auction outperforms the all-pay auction; in the latter, the opposite holds.} \label{fig:SPA_APA_indeterminate}
\end{figure}

The ranking between the second-price and all-pay auctions is indeterminate, as shown in Figure \ref{fig:SPA_APA_indeterminate}. This is because whereas the competitor-bid effect makes the second-price auction inferior to the all-pay auction, the sunk-bid effect offsets this effect.

\section{Relation to the Linkage Principle} \label{sec:linkage}

As mentioned in Section \ref{sec:ranking_4}, the worst-case revenue rankings between the four auctions in Theorem \ref{thm:ranking} are opposite to the expected revenue rankings in the affiliated values setup \citep{Mil82, Kris97}. By investigating the relationship between Theorem \ref{thm:methodology} and the Linkage Principle, this section explains why the two results are opposite.

We first introduce some notation and terminology. Let $f: \Theta^2 \rightarrow \mathbb{R}_+$ be the probability density of $P$. Recall that $P$ is \textit{symmetric} if $f(\theta, \theta') = f(\theta', \theta)$, and $P$ is \textit{affiliated} if for all $\theta, \theta', \varphi, \varphi' \in \Theta$,
\[
f(\theta, \theta') f(\varphi, \varphi') \leq f(\max \{ \theta, \varphi \}, \max \{ \theta', \varphi' \}) f(\min \{ \theta, \varphi \}, \min \{ \theta', \varphi' \}).
\]
In this section, we focus on symmetric auctions in which the highest bidder wins. Consider an auction with a symmetric and increasing equilibrium. Let $e_i(\hat \theta, \theta)$ and $w_i(\hat \theta, \theta)$ be the unconditional expected payment and the expected payment conditional on winning when bidder $i$ with type $\theta$ reports $\hat \theta$:
\[
e_i(\hat \theta, \theta) := \int_\Theta t_i(\hat \theta, \theta') P(d\theta'|\theta), \quad \text{and} \quad w_i(\hat \theta, \theta) := \int_0^{\hat \theta} t_i(\hat \theta, \theta') \frac{P(d\theta'|\theta)}{P([0, \hat \theta]|\theta)}.
\]
Also, $\partial_2$ denotes the partial derivative with respect to the second argument.

Next, recall the Linkage Principle:

\begin{theorem} [Linkage Principle; \citealp{Kris02}, Ch. 7] \label{thm:linkage}

Assume $P$ is symmetric and affiliated. Let $X$ and $Y$ be auctions with symmetric and increasing equilibria satisfying $e_i^X(0, 0)=e_i^Y(0, 0) = 0$. Suppose that either of the following two conditions holds:

\noindent (i) \textbf{Linkage Condition-Version 1 (LC1).} For all $i$ and $\theta$,
\[
\partial_2 e^X_i (\theta, \theta) \leq \partial_2 e^Y_i (\theta, \theta).
\]

\noindent (ii) \textbf{Linkage Condition-Version 2 (LC2).} The loser pays nothing in both $X$ and $Y$. In addition, for all $i$ and $\theta$,
\[
\partial_2 w_i^X(\theta, \theta) \leq \partial_2 w_i^Y(\theta, \theta).
\]

\noindent Then,
\[
\int_\Theta t^X_i(\theta, \theta') P(d\theta'|\theta) \leq \int_\Theta t^Y_i(\theta, \theta') P(d\theta'|\theta).
\]

\end{theorem}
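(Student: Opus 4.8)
The plan is to work entirely inside the postulated symmetric increasing equilibrium and to compare the two auctions along the diagonal $\hat\theta=\theta$. Write $G(\hat\theta\mid\theta):=P([0,\hat\theta]\mid\theta)$ for the probability that bidder $i$ of type $\theta$ defeats her competitor when she bids as type $\hat\theta$; because the highest bidder wins and both equilibria are symmetric and increasing, $G$ depends only on $P$ and the common allocation rule, not on the payment rule. Let $\partial_1$ and $\partial_2$ denote partial derivatives with respect to the first (reported) and second (true) arguments. The target inequality is precisely $e^X_i(\theta,\theta)\le e^Y_i(\theta,\theta)$, so I would show that the diagonal payment difference is nonincreasing and then integrate upward from the shared boundary value $e^X_i(0,0)=e^Y_i(0,0)=0$.

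For LC1, incentive compatibility says type $\theta$ maximizes $\theta\,G(\hat\theta\mid\theta)-e_i(\hat\theta,\theta)$ over the report $\hat\theta$; the first-order condition evaluated at $\hat\theta=\theta$ gives
\begin{equation*}
\partial_1 e_i(\theta,\theta)=\theta\,\partial_1 G(\theta\mid\theta),
\end{equation*}
whose right-hand side is common to $X$ and $Y$. Hence the report-derivative cancels in the total derivative along the diagonal, leaving
\begin{equation*}
\frac{d}{d\theta}\big[e^X_i(\theta,\theta)-e^Y_i(\theta,\theta)\big]=\partial_2 e^X_i(\theta,\theta)-\partial_2 e^Y_i(\theta,\theta)\le 0
\end{equation*}
by LC1. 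Integrating from $\theta=0$ and using $e^X_i(0,0)=e^Y_i(0,0)=0$ yields the conclusion.

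For LC2 the loser pays nothing, so $e_i(\hat\theta,\theta)=G(\hat\theta\mid\theta)\,w_i(\hat\theta,\theta)$, and since $G(\theta\mid\theta)\ge 0$ it suffices to compare the conditional payments $m^A(\theta):=w^A_i(\theta,\theta)$. Writing the incentive objective as $G(\hat\theta\mid\theta)\,[\theta-w_i(\hat\theta,\theta)]$, the first-order condition now gives $\partial_1 w_i(\theta,\theta)=a(\theta)\,[\theta-m^A(\theta)]$ with $a(\theta):=\partial_1 G(\theta\mid\theta)/G(\theta\mid\theta)\ge 0$; taking the total derivative of $m^A$ turns this into the linear differential inequality
\begin{equation*}
\big(m^X-m^Y\big)'(\theta)+a(\theta)\,\big(m^X-m^Y\big)(\theta)=\partial_2 w^X_i(\theta,\theta)-\partial_2 w^Y_i(\theta,\theta)\le 0,
\end{equation*}
where only the auction-dependent terms survive and the inequality is exactly LC2. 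Multiplying by the integrating factor $\exp\!\big(\int_0^\theta a\big)$ makes the left-hand side the derivative of a nonincreasing function; with the boundary value $m^X(0)=m^Y(0)=0$ this gives $m^X(\theta)\le m^Y(\theta)$, and multiplying back by $G(\theta\mid\theta)\ge 0$ recovers $e^X_i(\theta,\theta)\le e^Y_i(\theta,\theta)$.

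I expect the LC2 branch to be the main obstacle. Deriving and signing the linear ODE for the conditional payment, verifying $a\ge 0$, and pinning down the boundary value $m^A(0)=0$ all require care: because $G(0\mid 0)=0$, the quantity $w_i(0,0)$ is a $0/0$ limit, so I would fix it not from $e^A_i(0,0)=0$ directly but from the squeeze $0\le m^A(\theta)\le\theta$ (nonnegative payments together with the individual-rationality bound $G(\theta\mid\theta)[\theta-m^A(\theta)]\ge 0$). Throughout, symmetry and affiliation of $P$ are what justify the maintained hypothesis that each auction admits a symmetric increasing equilibrium and that the envelope/first-order manipulations are legitimate; I would record the usual differentiability of the equilibrium objects as a standing regularity assumption, as is customary in this literature.
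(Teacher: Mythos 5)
The paper offers no proof of this theorem: it is imported verbatim from \citet[Ch.~7]{Kris02}, and your argument is exactly the canonical proof from that source---the FOC along the diagonal kills the common report-derivative (the win probability $G(\hat\theta\mid\theta)$ depends only on $P$ and the highest-bidder-wins allocation), leaving a signed ODE for the diagonal payment difference that is integrated from the common boundary value. Both branches are substantively correct, including your observation that affiliation enters only through the maintained existence of symmetric increasing equilibria, not through the comparison argument itself.

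One technical slip in the LC2 branch is worth repairing explicitly. Since $P$ has a positive density, $G(\theta\mid\theta)\sim f(0\mid 0)\,\theta$ near zero while $\partial_1 G(\theta\mid\theta)\to f(0\mid 0)>0$, so $a(\theta)=\partial_1 G(\theta\mid\theta)/G(\theta\mid\theta)$ behaves like $1/\theta$ and $\int_0^\theta a=\infty$: the integrating factor $\exp\bigl(\int_0^\theta a\bigr)$ you multiply by is literally infinite, and ``integrating from $0$'' is ill-defined as written. The fix uses precisely the squeeze $0\le m^A(\theta)\le\theta$ you already derived: for $0<\epsilon<\theta$, integrating the differential inequality on $[\epsilon,\theta]$ gives
\begin{equation*}
\bigl(m^X-m^Y\bigr)(\theta)\;\le\; e^{-\int_\epsilon^\theta a}\,\bigl(m^X-m^Y\bigr)(\epsilon)\;\le\;\bigl|m^X(\epsilon)-m^Y(\epsilon)\bigr|\;\le\;2\epsilon,
\end{equation*}
using $a\ge 0$ so that the exponential factor is at most one; letting $\epsilon\to 0$ yields $m^X\le m^Y$ without ever invoking a divergent integral. (This limiting argument is the rigorous counterpart of Krishna's qualitative comparison, which avoids the integrating factor by arguing that the difference cannot become positive because its derivative is nonpositive wherever it vanishes.) With that one-line repair, your proof is complete and coincides with the cited one.
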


According to Theorem \ref{thm:methodology}, when $P$ is IID, WSCC implies that $X$ yields a higher worst-case revenue than $Y$ (recall that RRC automatically holds by the Revenue Equivalence Principle). On the contrary, according to the Linkage Principle, when $P$ is symmetric and affiliated, LCs imply that $Y$ yields a higher expected revenue than $X$. Proposition \ref{prop:linkage_4} below shows that between the four auctions studied in Section \ref{sec:ranking_4}, WSCC holds if and only if either of the two LCs holds. Thus, Theorem \ref{thm:methodology} and the Linkage Principle work in the opposite directions. This explains why the worst-case revenue rankings in Theorem \ref{thm:ranking} are opposite to the expected revenue rankings with affiliated values.

\begin{proposition} \label{prop:linkage_4}

For $X \neq Y \in \{ I, II, A, W \}$, the following conditions are equivalent:

\noindent (i) Under any IID $P$, $(X, Y)$ satisfies WSCC.

\noindent (ii) Under any symmetric and affiliated $P$ such that $X$ and $Y$ have symmetric and increasing equilibria,\footnote{The first-price and second-price auctions have equilibria whenever $P$ is symmetric and affiliated. \cite{Kris97} provide sufficient conditions on $P$ for equilibrium existence in the all-pay auction and war of attrition, omitted in our paper due to space limitation.} $(X, Y)$ satisfies either LC1 or LC2.

\end{proposition}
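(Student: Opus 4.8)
The plan is to prove the equivalence by a finite case analysis over the twelve ordered pairs $(X,Y)$, $X\neq Y\in\{I,II,A,W\}$, showing that both (i) and (ii) hold for exactly the same collection of pairs. First I would identify that collection. Writing out the transfer differences $t^X_i(\theta,\cdot)-t^Y_i(\theta,\cdot)$ from the closed forms in Section~\ref{sec:ranking_4} and using the identities $b^A=F\cdot b^I$, $b^A\le b^W$ and $b^{II}(\theta)=\theta$, one checks directly that $t^X_i(\theta,\cdot)-t^Y_i(\theta,\cdot)$ is single-crossing from above (hence WSCC holds for every IID $P$) precisely for the pairs $(I,II),(I,A),(I,W),(A,W),(II,W)$, while for all reversed pairs and for $(II,A),(A,II)$ the difference crosses from below or changes sign twice, so WSCC fails for every IID $P$. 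The one pair requiring care is $(II,W)$, where on the winning region the difference equals $\theta'-b^W(\theta')$: its single crossing is exactly the single crossing of the bid schedules $b^{II}$ and $b^W$, which condition \eqref{eq:F_hazard}---here forced by the requirement that the war of attrition admit an increasing equilibrium---guarantees.

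The bridge between WSCC and the linkage conditions is an integration-by-parts identity. Since affiliation makes the conditional law $F(\cdot\mid\theta)$ decrease in $\theta$, the kernel $m(\theta'):=-\partial_\theta F(\theta'\mid\theta)$ is nonnegative and vanishes at the endpoints, $m(0)=m(1)=0$. Because $\partial_\theta P(d\theta'\mid\theta)=-\partial_{\theta'}m(\theta')\,d\theta'$, integrating by parts gives, for each auction,
\[
\partial_2 e_i(\theta,\theta)=\int_{[0,1]} t_i(\theta,\theta')\,\partial_\theta P(d\theta'\mid\theta)=\int_{[0,1]} m(\theta')\,d_{\theta'}\,t_i(\theta,\theta'),
\]
where the boundary terms drop out by $m(0)=m(1)=0$ and $d_{\theta'}t_i$ is the Stieltjes measure of $t_i(\theta,\cdot)$ (including the atom at the win/lose boundary). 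Hence $\partial_2\bigl(e^X_i-e^Y_i\bigr)(\theta,\theta)=\int m\,d_{\theta'}(t^X_i-t^Y_i)$, which since $m\ge0$ is $\le0$---that is, LC1 holds---whenever $t^X_i(\theta,\cdot)-t^Y_i(\theta,\cdot)$ is decreasing in $\theta'$. For $(I,A),(I,W),(A,W)$ the difference is genuinely decreasing (constant-then-drop, decreasing-then-drop, decreasing-then-flat), so LC1 follows at once; and because affiliation is strict off the IID case, the inequality is strict, which I would reuse for the converse.

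For $(I,II)$ the difference is single-crossing but jumps \emph{upward} at $\theta'=\theta$ (both auctions have the loser pay nothing), so the LC1 integral is not signable. Here I would switch to LC2: on the winning region $[0,\hat\theta]$ the difference $b^I(\theta)-\theta'$ is genuinely decreasing, and the conditional-on-winning analogue of the identity above---with the conditional-on-winning law again shifting up in $\theta$ by affiliation---yields $\partial_2\bigl(w^X_i-w^Y_i\bigr)(\theta,\theta)\le0$. For the seven pairs outside the list I would run the identity in reverse: for the reversed pairs the Stieltjes measure of the difference is flipped, making $\partial_2(e^X-e^Y)$ or $\partial_2(w^X-w^Y)$ strictly positive under strict affiliation, so neither LC holds; and for $(II,A),(A,II)$, LC2 is unavailable (the loser pays in $A$) while the single LC1 inequality $\theta\,\partial_\theta F(\theta\mid\theta)\lessgtr\int_0^\theta\partial_\theta F(\theta'\mid\theta)\,d\theta'$ points in opposite directions for the two orderings, so I can exhibit a symmetric affiliated $P$ violating it in each case.

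The main obstacle is the pair $(II,W)$. Here the loser pays in $W$, so only LC1 is available, yet the difference $\theta'-b^W(\theta')$ is single-crossing but \emph{not} monotone, so the clean decreasing-difference argument of the second paragraph does not apply. The identity reduces the claim to showing
\[
\theta\,\partial_\theta F(\theta\mid\theta)+\int_0^\theta\bigl[1-(b^W)'(\theta')\bigr]\,m(\theta')\,d\theta'\le0,
\]
in which the first term is nonpositive but the integrand changes sign. Signing this requires combining the single crossing of $b^{II},b^W$ supplied by the equilibrium-existence (hazard) regularity with a single-crossing integral argument against the affiliation kernel $m$, rather than mere monotonicity; this is the step I expect to be most delicate, and it is exactly where condition \eqref{eq:F_hazard} enters (i) and (ii) simultaneously, sealing the equivalence.
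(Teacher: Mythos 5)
Your integration-by-parts bridge---writing $\partial_2 e_i(\theta,\theta)=\int m(\theta')\,d_{\theta'}t_i(\theta,\theta')$ with $m(\theta'):=-\partial_\theta F(\theta'\mid\theta)\ge 0$ and reading off LC1 from monotone transfer differences---is sound and genuinely more unified than the paper's case-by-case verification (the paper cites \cite{Kris02} for $(I,II)$ and $(I,A)$, argues $(A,W)$ by the monotonicity of $t^W$ in $\theta'$, and gets $(I,W)$ by chaining $(I,A)$ and $(A,W)$), and your reduction of $(II,W)$ to the displayed integral inequality is algebraically correct. But there is a genuine gap exactly where you flag it: the $(II,W)$ case is never proved, only ``expected,'' and your prediction of what closes it is wrong. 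No single-crossing of $b^{II}$ and $b^W$ and no marginal hazard regularity \eqref{eq:F_hazard} are needed in condition (ii): affiliation alone suffices. The paper uses the Krishna--Morgan formulas $e_i^{II}(\hat\theta,\theta)=\int_0^{\hat\theta}zf(z\mid\theta)\,dz$ and $e_i^{W}(\hat\theta,\theta)=\int_0^{\hat\theta}z\lambda(z\mid z)[1-F(z\mid\theta)]\,dz$ together with the fact (a consequence of affiliation, Fact 3 of \cite{Kris97}) that $\lambda(z\mid\theta)$ decreases in $\theta$. From that monotonicity, $\lambda(z\mid\theta)\le\lambda(z\mid z)$ for $z\le\theta$, and differentiating it in $\theta$ gives $\partial_\theta f(z\mid\theta)/f(z\mid\theta)\le\partial_\theta[1-F(z\mid\theta)]/[1-F(z\mid\theta)]$; since $\partial_\theta[1-F(z\mid\theta)]=m(z)\ge 0$, combining the two yields the pointwise bound $\partial_\theta f(z\mid\theta)\le\lambda(z\mid z)\,\partial_\theta[1-F(z\mid\theta)]$, which integrates against $z$ to give $\partial_2 e_i^{II}(\theta,\theta)\le\partial_2 e_i^{W}(\theta,\theta)$, i.e.\ LC1. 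Your ``delicate single-crossing integral argument against $m$'' is thus unnecessary, and leaving it as an expectation means the hardest direction of the equivalence is missing from your proof.

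A secondary error: in part (i) you assert that condition \eqref{eq:F_hazard} is ``forced by the requirement that the war of attrition admit an increasing equilibrium.'' That is false---under any IID $P$ with positive density the symmetric increasing equilibrium $b^W(\theta)=\int_0^\theta zf(z)/[1-F(z)]\,dz$ exists regardless of \eqref{eq:F_hazard}; in the paper that condition is invoked only in Theorem \ref{thm:ranking}(iv) to guarantee that $b^{II}$ and $b^W$ cross once, which is what WSCC for $(II,W)$ actually needs (and which the paper's own treatment of condition (i) for this pair tacitly inherits). The rest of your case analysis is fine: the list $(I,II),(I,A),(I,W),(A,W),(II,W)$ matches the paper's, your switch to LC2 for $(I,II)$ (where the difference jumps upward at $\theta'=\theta$) mirrors the textbook argument, and your strategy for refuting the reversed pairs and $(II,A),(A,II)$ via strict affiliation and the sign-ambiguous quantity $\theta\,\partial_\theta F(\theta\mid\theta)-\int_0^\theta\partial_\theta F(\theta'\mid\theta)\,d\theta'$ is a reasonable substitute for the paper's unelaborated ``easy to verify,'' though it too would need explicit examples to be complete.
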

\begin{proof}

See Appendix \ref{appen:linkage_4}.

\end{proof}

\begin{figure}[t]
\centering
\includegraphics[scale=1]{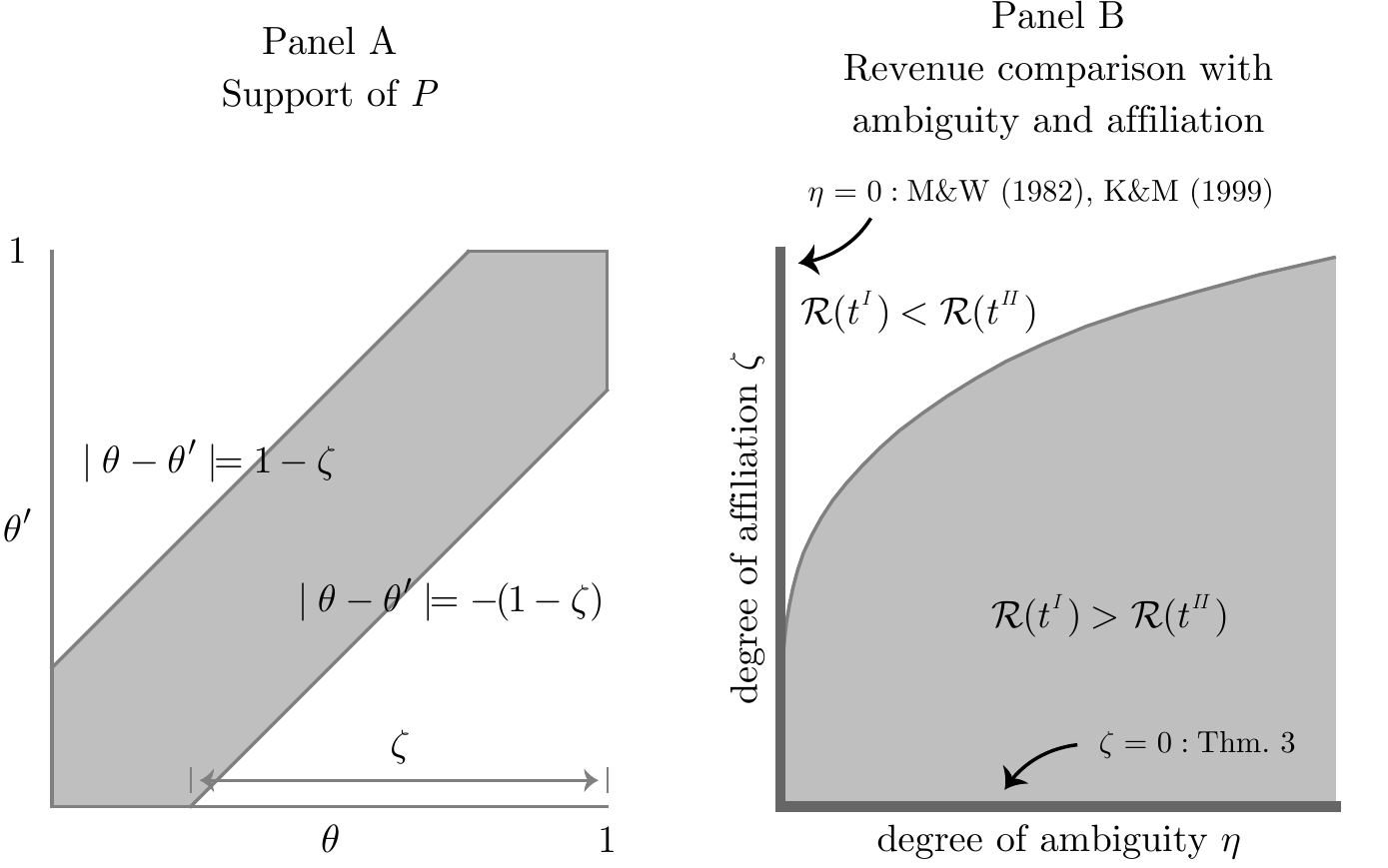}
\caption{\textbf{Ambiguity vs. affiliation: Indeterminacy in the presence of both.} \\
Let $\mathcal{Q}$ be the relative entropy neighborhood (Example \ref{ex:Q} (a)): $\mathcal{Q} := \{ Q \in \Delta(\Theta^2, P): \int \log (dQ/dP) dQ \leq \eta \}$. Also, let $P$ be uniform over $\{ (\theta, \theta') \in \Theta^2: |\theta-\theta'| \leq 1-\zeta \}$, illustrated in Panel A. If $\zeta=0$, types are independent; if $\zeta=1$, types are perfectly affiliated. The parameters $\eta \geq 0$ and $\zeta \in [0, 1]$ represent the degrees of ambiguity and affiliation, respectively.\\
Panel B compares the worst-case revenues of the first-price and second-price auctions for each $(\eta, \zeta)$. In the shaded region where ambiguity dominates affiliation, the ranking is the same as in Theorem \ref{thm:ranking}. By contrast, in the white region where affiliation dominates ambiguity, the ranking is the same as in \cite{Mil82} and \cite{Kris97}.} \label{fig:affiliation}
\end{figure}

Intuitively, WSCC means that a bidder's payment is more negatively associated with her competitor's type in $X$ than in $Y$. On the other hand, the standard interpretation of LCs is that a bidder's payment is more negatively associated with her own type in $X$ than in $Y$. However, a negative association between a bidder's payment and her competitor's type creates a negative association between her payment and her own type in the affiliated value setup. As a result, WSCC and LCs hold simultaneously.

A direct implication of Proposition \ref{prop:linkage_4} is that in the presence of both ambiguity and affiliation, the revenue rankings between the four auctions in Section \ref{sec:ranking_4} are indeterminate. When the effect of ambiguity dominates the effect of affiliation, the ranking is the same as in Theorem \ref{thm:ranking}; in the opposite case, the ranking is the same as in \cite{Mil82} and \cite{Kris97}. Figure \ref{fig:affiliation} illustrates this fact by comparing the first-price and second-price auctions. Comparisons between other pairs of auctions yield similar results.

\section{Extensions} \label{sec:extension}

This section presents two extensions: ambiguity averse bidders (Section \ref{subsec:bidder_aversion}) and ambiguity seeking seller (Section \ref{subsec:seeking}).

\subsection{Ambiguity averse bidders} \label{subsec:bidder_aversion}

Existing studies on auctions with ambiguity mostly focus on the implications of the bidders' ambiguity aversion (Table \ref{tab:extension}; see also Section \ref{subsec:literature}). This section partially extends our results to the setup in which both the seller and bidders exhibit ambiguity aversion. Specifically, assuming that the reference belief is IID, we show that the worst-case revenue comparison results between the four auctions remain unchanged, except for the case between the second-price auction and war of attrition.

\begin{table} [t]
\centering
{\small
\begin{tabular}{|c|c|c|c|} \hline
\diaghead{\theadfont Bidder \,\, Seller}
{Bidder}{Seller}&
\begin{tabular}{@{}c@{}} Ambiguity \\ neutral \end{tabular} & \begin{tabular}{@{}c@{}} Ambiguity \\ averse \end{tabular} & \begin{tabular}{@{}c@{}} Ambiguity \\ seeking \end{tabular} \\ \hline
\begin{tabular}{@{}c@{}} Ambiguity \\ neutral \end{tabular} & \begin{tabular}{@{}c@{}} \cite{Myer81} \\ \cite{Mil82} \\
\cite{Kris97} \end{tabular} & \begin{tabular}{@{}c@{}} \citet[Sec. 6]{Bo06} \\ Sections \ref{sec:methodology}-\ref{sec:linkage} \end{tabular} & Section \ref{subsec:seeking}  \\ \hline
 \begin{tabular}{@{}c@{}} Ambiguity \\ averse \end{tabular} & \begin{tabular}{@{}c@{}} \cite{Bo09} \\ \cite{Bodoh12} \\ \cite{Aus22} \\ \cite{Ghosh21} \\ \cite{Baik21} \end{tabular} & \begin{tabular}{@{}c@{}} \cite{Lo98} \\ \citet[Sec. 3]{Bo06} \\ Section \ref{subsec:bidder_aversion} \end{tabular} & - \\ \hline
\end{tabular}
}
\caption{\textbf{Comparison of the setups.} This table compares the setups studied in related works and in each section of this paper.} \label{tab:extension}
\end{table}

We represent a bidder's belief about the competitor's type by its cumulative distribution function $G: \Theta \rightarrow [0, 1]$. Also, denote a bidder's reference belief by $F(z) := P \{ (\theta, \theta'): \theta \leq z, \theta' \in \Theta \}$. Each bidder holds a set of priors $\mathcal{Q}^B$ (assumed to be weakly compact) that satisfies the following assumption:

\begin{assumption} \label{assum:Q_bidder}

\noindent (i) $\mathcal{Q}^B \subset \Delta(\Theta, F)$.

\noindent (ii) $\mathcal{Q}^B$ is rearrangement invariant with respect to $\Delta(\Theta, F)$.

\end{assumption}

Consider a symmetric sealed-bid auction in which a bidder wins the object with probability $x(b, b')$ and pays $\tau(b, b')$ when she bids $b$ and the competitor bids $b'$. A bidding strategy $b^*: \Theta \rightarrow \mathbb{R}_+$ is a symmetric equilibrium if
\[
b^*(\theta) \in \argmax_b \min_{G \in \mathcal{Q}^B} \int_\Theta \left[ \theta x(b, b^*(\theta')) - \tau(b, b^*(\theta')) \right] dG(\theta') \quad \text{for all $\theta$.}
\]
Then, the transfer function is given by $t_i(\theta, \theta') := \tau(b^*(\theta), b^*(\theta'))$.

Existing studies provide closed-form formulas for the equilibria of the first-price, second-price and all-pay auctions \citep{Lo98, Baik21}. Regarding the war of attrition, although an implicit characterization of equilibrium is available, a sufficient condition for the existence of equilibrium is unknown. As the investigation of this problem is beyond the scope of this paper, we simply assume the existence of equilibrium when necessary.

Now, we compare the worst-case revenues of the four auctions in Section \ref{sec:ranking_4}. Because Theorem \ref{thm:methodology} imposes no restrictions on the bidders' preferences, it is applicable to the current setup. Therefore, to show that the seller prefers auction $X$ to auction $Y$, it suffices to verify WSCC and RRC. Arguing as in Section \ref{sec:ranking_4}, it is straightforward to prove that the pairs $(X, Y) = (I, II), (I, A), (A, W)$ satisfy WSCC (provided that the war of attrition has an equilibrium). In addition, Proposition \ref{prop:bidder_aversion}, proven by \cite{Baik21}, shows that these pairs of auctions satisfy RRC:\footnote{\cites{Baik21} assumption on the bidders' sets of priors differs from ours. However, their proofs are valid as long as the following property holds: for any bounded measurable $\pi: \Theta \rightarrow \mathbb{R}$ and a $\sigma$-algebra $\mathcal{E}$,
\[
\min_{G \in \mathcal{Q}^B} \int_\Theta \mathbb{E}_F[\pi|\mathcal{E}] d\nu \geq \min_{G \in \mathcal{Q}^B} \int_\Theta \pi d\nu.
\]
\citet[Thm. 2]{Cer12} show that Assumption \ref{assum:Q_bidder} implies this property.}

\begin{proposition} [\citealp{Baik21}] \label{prop:bidder_aversion}

Suppose $P$ is IID and $\mathcal{Q}^B$ satisfies Assumption \ref{assum:Q_bidder}. Then, for all $i$ and $\theta$,

\noindent (i) $\int t^I_i(\theta, \theta') dF(\theta') \geq \int t^{II}_i(\theta, \theta') dF(\theta')$.

\noindent (ii) $\int t^I_i(\theta, \theta') dF(\theta') \geq \int t^A_i(\theta, \theta') dF(\theta')$.

\noindent (iii) If the war of attrition has an equilibrium, $\int t^A_i(\theta, \theta') dF(\theta') \geq \int t^W_i(\theta, \theta') dF(\theta')$.

\end{proposition}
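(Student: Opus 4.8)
The plan is to verify the three inequalities directly from the equilibrium strategies under ambiguity, keeping in mind that because the bidders are ambiguity averse the Revenue Equivalence Principle \citep{Myer81} no longer holds: each inequality is a genuine ranking (strict whenever there is ambiguity), so the task is to \emph{sign} the three interim-payment gaps, not to show they vanish. Every comparison will be controlled by a single object, the lower envelope $\underline{G}(s) := \min_{G \in \mathcal{Q}^B} G(s)$ of the bidders' priors. First I would record two structural facts about $\underline{G}$. Since $F \in \mathcal{Q}^B$, we have $\underline{G} \leq F$ pointwise, so $\underline{G}$ is a cumulative distribution that first-order stochastically dominates $F$. Moreover, because $\mathcal{Q}^B$ is rearrangement invariant (Assumption \ref{assum:Q_bidder}), the prior attaining the minimum at each $s$ can be taken to underweight low types---precisely the types against which a bidder wins---by a Hardy--Littlewood argument as in Proposition \ref{prop:anti_comonotone}; this forces $d\underline{G}/dF$ to be increasing, equivalently $\underline{G}(s)/F(s)$ to be increasing in $s$. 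This monotonicity is the interim-level manifestation of the conditioning property recorded in the footnote, and it is the engine of the whole proof.

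Next I would pin down the equilibrium strategies. In the second-price auction truthful bidding is a dominant strategy for \emph{every} belief, since the integrand $(\theta - \theta')\mathbf{1}[\theta > \theta']$ is maximized pointwise at $b^{II}(\theta) = \theta$; hence the second-price transfer, and with it $e^{II}(\theta) := \int_0^\theta z \, dF(z) = \theta F(\theta) - \int_0^\theta F(z)\,dz$, is unchanged by ambiguity. In the first-price and all-pay auctions the winning event carries a positive coefficient (value minus bid, respectively value), so the worst-case evaluation uses the minimized winning probability $\underline{G}(s)$, and the first-order conditions yield $b^I(\theta) = \theta - \underline{G}(\theta)^{-1}\int_0^\theta \underline{G}(z)\,dz$ and $b^A(\theta) = \theta\underline{G}(\theta) - \int_0^\theta \underline{G}(z)\,dz$. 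Integrating the transfer functions against $F$ then gives $e^I(\theta) = b^I(\theta) F(\theta)$ and $e^A(\theta) = b^A(\theta)$.

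With these formulas part (ii) is immediate: $e^I(\theta) = \frac{F(\theta)}{\underline{G}(\theta)}\, b^A(\theta) = \frac{F(\theta)}{\underline{G}(\theta)}\, e^A(\theta) \geq e^A(\theta)$, because $F/\underline{G} \geq 1$ and $b^A \geq 0$. For part (i), substituting the formulas reduces $e^I(\theta) \geq e^{II}(\theta)$ to $\frac{1}{F(\theta)}\int_0^\theta F(z)\,dz \geq \frac{1}{\underline{G}(\theta)}\int_0^\theta \underline{G}(z)\,dz$, i.e.\ to $\mathbb{E}_{\underline{G}}[z \mid z \leq \theta] \geq \mathbb{E}_F[z \mid z \leq \theta]$. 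This truncated-mean inequality is exactly the statement that $\underline{G}$ conditioned on $[0,\theta]$ first-order stochastically dominates $F$ conditioned on $[0,\theta]$, which is precisely the monotonicity of $\underline{G}(s)/F(s)$ established in the first step. Intuitively, these two gaps are the interim-payment analogues of the sunk-bid and competitor-bid effects: the distortion $\underline{G}$ shifts weight toward high types, penalizing the all-pay and second-price payment structures relative to the first-price one.

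Part (iii) follows the same template but is more delicate. Assuming the war of attrition has an equilibrium, a bidder imitating type $s$ pays the loser's bid when she wins and her own bid $b^W(s)$ when she loses, so her payoff $\int_0^s (\theta - b^W(\theta'))\,dG - b^W(s)(1 - G(s))$ no longer has a sign-definite winning term; the worst-case prior is therefore harder to identify than $\underline{G}$. Once $b^W$ is characterized, $e^W(\theta) = \int_0^\theta b^W(\theta')\,dF(\theta') + b^W(\theta)(1 - F(\theta))$ is compared with $e^A(\theta) = b^A(\theta)$ using the same monotonicity of $\underline{G}/F$. The hard part throughout is the first step: for a general rearrangement-invariant $\mathcal{Q}^B$ the minimizer of $G(s)$ may vary with $s$, so one must show that the envelope $\underline{G}$ nevertheless inherits an increasing likelihood ratio---this is where rearrangement invariance, equivalently the conditioning inequality of the footnote, does the real work---together with the corresponding worst-case characterization for the war of attrition.
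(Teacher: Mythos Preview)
The paper does not prove Proposition~\ref{prop:bidder_aversion}; it attributes the result to \cite{Baik21} and records in the footnote that the argument goes through whenever the conditional--expectation inequality $\min_{G}\int \mathbb{E}_F[\pi\mid\mathcal{E}]\,dG \geq \min_{G}\int \pi\,dG$ holds, a property implied by Assumption~\ref{assum:Q_bidder} via \citet[Thm.~2]{Cer12}. There is therefore no in-paper proof to compare against; what can be assessed is whether your reconstruction stands on its own.

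For parts (i) and (ii) your route via the lower envelope $\underline G(s)=\min_{G\in\mathcal{Q}^B}G(s)$ is sound. The equilibrium bids in the first-price and all-pay auctions are indeed the standard formulas with $\underline G$ in place of $F$, and the reductions you give for $e^I\ge e^A$ and $e^I\ge e^{II}$ are correct. One imprecision: the properties ``$d\underline G/dF$ increasing'' and ``$\underline G/F$ increasing'' are \emph{not} equivalent---the former implies the latter but not conversely---and only the latter actually follows from your Hardy--Littlewood reasoning. Rearrangement invariance gives, for each $G\in\mathcal{Q}^B$, an increasing rearrangement $G^\uparrow\in\mathcal{Q}^B$ with $G^\uparrow(s)\le G(s)$ for every $s$; hence $\underline G(s)=\min\{G(s):G\in\mathcal{Q}^B,\ dG/dF\text{ increasing}\}$. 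Each such $G$ has $G/F$ increasing, and the pointwise minimum of increasing functions is increasing, so $\underline G/F$ is increasing. But the minimiser may vary with $s$, so $\underline G$ itself need not have monotone density. Since only the ratio monotonicity enters your truncated-mean step, the argument for (i)--(ii) survives.

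Part (iii) is a genuine gap. You correctly observe that the worst-case belief in the war of attrition is not $\underline G$ because the payoff is not monotone in the winning event, but you do not characterise $b^W$ or carry out the comparison with $e^A$; the sentence ``once $b^W$ is characterized\ldots'' is exactly where the proof would have to begin. Without that, (iii) is unproved. The approach signalled in the footnote---working directly with the conditioning inequality rather than with envelope distributions---handles all three cases uniformly and avoids the need for an explicit MMEU equilibrium in the war of attrition; either reproducing that route or pinning down $b^W$ explicitly is the missing step.
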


As a result, we obtain Corollary \ref{cor:bidder_aversion}, which states that Theorem \ref{thm:ranking} (i)-(iii) remain valid when the bidders are ambiguity averse. 

\begin{corollary} \label{cor:bidder_aversion}
Suppose $P$ is IID, $\mathcal{Q}$ satisfies Assumption \ref{assum:Q} or \ref{assum:Q_indep_IID}, and $\mathcal{Q}^B$ satisfies Assumption \ref{assum:Q_bidder}. Then, 

\noindent (i) $\mathcal{R}(t^I) \geq \mathcal{R}(t^{II})$.

\noindent (ii) $\mathcal{R}(t^I) \geq \mathcal{R}(t^A)$. 

\noindent (iii) If the war of attrition has an equilibrium, $\mathcal{R}(t^A) \geq \mathcal{R}(t^W)$.
\end{corollary}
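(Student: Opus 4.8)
The plan is to reduce Corollary \ref{cor:bidder_aversion} to Theorem \ref{thm:methodology} by verifying that each of the relevant auction pairs satisfies both WSCC and RRC in the ambiguity-averse-bidder setup. The key observation, already noted in the paragraph preceding Proposition \ref{prop:bidder_aversion}, is that Theorem \ref{thm:methodology} places restrictions only on the seller's set of priors $\mathcal{Q}$ (through Assumption \ref{assum:Q} or \ref{assum:Q_indep_IID}) and on the transfer functions $t^X, t^Y$ (through Assumption \ref{assum:t} and the two conditions WSCC/RRC); it says nothing about how those transfer functions were generated. Consequently, the theorem applies verbatim once we supply transfer functions arising from the ambiguity-averse bidders' equilibria, provided those transfers meet the hypotheses.

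First I would fix the three pairs $(X,Y) = (I,II), (I,A), (A,W)$ and record the equilibrium transfer functions for ambiguity-averse bidders, citing the closed-form equilibria of \cite{Lo98} and \cite{Baik21} for the first-price, second-price, and all-pay auctions, and invoking the assumed existence of a symmetric increasing equilibrium for the war of attrition. Since the reference belief $P$ is IID, each bidder faces the same marginal $F$ and the equilibrium bidding strategy $b^*$ is symmetric and increasing; the qualitative \emph{shape} of each transfer function as a function of the competitor's type $\theta'$ is therefore the same as in Section \ref{sec:ranking_4}: in I and II the bidder pays only when winning ($\theta > \theta'$), in A the payment $b^*(\theta)$ is independent of $\theta'$, and in W the payment switches from the competitor's bid to the bidder's own bid at the threshold $\theta' = \theta$. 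These structural features are exactly what drive the single-crossing pictures in Figure \ref{fig:payment_schedule_comparison}, so I would argue that WSCC holds for each pair by the same geometric reasoning as in the ambiguity-neutral case, the only change being that the equilibrium bid functions are now the ambiguity-averse ones. I would also check Assumption \ref{assum:t}: monotonicity of the total transfer in each argument follows from monotonicity of $b^*$, and the atomlessness condition holds because $P$ has a positive density and $b^*$ is strictly increasing.

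Next I would dispatch RRC. This is where Proposition \ref{prop:bidder_aversion} does the real work: its three inequalities are precisely the statement that $X$ yields interim expected payments under $F$ at least as large as $Y$ for each of the three pairs, which is RRC with the conditional distribution $P(\cdot\mid\theta)$ replaced by the marginal $F$ (these coincide because $P$ is IID). Thus RRC is not an equality here---unlike the ambiguity-neutral case where revenue equivalence forced equality---but a genuine inequality supplied by \cite{Baik21}; the direction is exactly the one Theorem \ref{thm:methodology} requires. With WSCC and RRC in hand for each pair, Theorem \ref{thm:methodology} yields $\mathcal{R}(t^I) \geq \mathcal{R}(t^{II})$, $\mathcal{R}(t^I) \geq \mathcal{R}(t^A)$, and (under the equilibrium-existence proviso) $\mathcal{R}(t^A) \geq \mathcal{R}(t^W)$, which are statements (i)--(iii).

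The main obstacle is verifying WSCC for the ambiguity-averse transfer functions with full rigor rather than by appeal to the figure. The subtle point is that the single-crossing threshold $\hat\theta$ and the ordering of the two transfer functions on either side of it depend on comparing the ambiguity-averse equilibrium bids across auctions, and these bids are no longer pinned down by the clean formulas of Section \ref{sec:ranking_4}; one must extract enough monotonicity and sign information from the implicit or closed-form characterizations in \cite{Lo98} and \cite{Baik21} to locate the crossing and confirm it occurs once and from above. For the pair $(A,W)$ this is the most delicate, since it also requires the assumed war-of-attrition equilibrium to be increasing so that its transfer has the requisite kink at $\theta' = \theta$. I would therefore devote the bulk of the argument to these shape verifications, after which the conclusion follows mechanically from Theorem \ref{thm:methodology}. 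Note finally that statement (iv) of Theorem \ref{thm:ranking}, comparing II and W, is deliberately excluded from the corollary, consistent with the text's remark that the II-versus-W comparison does not extend to ambiguity-averse bidders.
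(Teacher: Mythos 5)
Your proposal is correct and follows essentially the same route as the paper: the text of Section \ref{subsec:bidder_aversion} proves the corollary exactly by noting that Theorem \ref{thm:methodology} is preference-free on the bidders' side, verifying WSCC for the pairs $(I,II)$, $(I,A)$, $(A,W)$ by the same structural arguments as in Section \ref{sec:ranking_4} (with the ambiguity-averse equilibrium bids substituted in), and obtaining RRC as a genuine inequality from Proposition \ref{prop:bidder_aversion} rather than from revenue equivalence. Your additional observations---that Assumption \ref{assum:t} must be rechecked, that the war-of-attrition equilibrium must be increasing for the $(A,W)$ crossing, and that part (iv) of Theorem \ref{thm:ranking} is deliberately dropped---all match the paper's treatment.
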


\noindent The primary difficulty with extending Theorem \ref{thm:ranking} (iv), which compares the second-price auction and war of attrition, lies in the complexity of the equilibrium characterization of the war of attrition.

\begin{remark} \normalfont 

\cite{Aus22} analyze the Dutch auction with ambiguity averse bidders. They find that due to dynamic inconsistency, the strategic equivalence between the Dutch and first-price auctions breaks down, and the equilibrium bidding strategy of the Dutch auction is higher than that of the first-price auction. This result, combined with Corollary \ref{cor:bidder_aversion}, implies that when both the seller and bidders exhibit ambiguity aversion, the Dutch auction outperforms the four static auctions studied in this section.

\end{remark}

\subsection{Ambiguity seeking seller} \label{subsec:seeking}

Experimental evidence shows that there is substantial heterogeneity in individuals' attitudes toward ambiguity, and some individuals are ambiguity seeking \citep{Ahn14, Chan22}. This section studies the setup in which the seller displays an ambiguity seeking preference. That is, she evaluates an auction by the \textit{best-case revenue} $\mathcal{R}^{\max}(t)$, defined as
\[
    \mathcal{R}^{\max}(t) := \max_{Q \in \mathcal{Q}} \iint_{\Theta^2} [t_1(\theta, \theta')+t_2(\theta', \theta)] Q(d\theta, d\theta').
\]

Proposition \ref{prop:ambiguity_seeking} states that if auctions $X$ and $Y$ satisfy the opposite condition to WSCC---named the \textit{Negative Weak Single-Crossing Condition (NWSCC)}---and RRC, then the ambiguity seeking seller prefers $X$ to $Y$. The name NWSCC is derived from the fact that it requires the negatives of transfer functions to satisfy WSCC: i.e., $t^X$ and $t^Y$ satisfy NWSCC if and only if $-t^X$ and $-t^Y$ satisfy WSCC.

\begin{proposition} \label{prop:ambiguity_seeking}

Suppose $\mathcal{Q}$ satisfies Assumption \ref{assum:Q} or \ref{assum:Q_indep_IID}. Let $X$ and $Y$ be auctions satisfying Assumption \ref{assum:t}. Consider the following condition:

\noindent \textbf{Negative Weak Single-Crossing Condition (NWSCC).} For all $i$ and $\theta$, there exists a threshold $\hat \theta \in [0, 1]$ such that
\[
\theta' < \hat \theta \implies t^X_i(\theta, \theta') \leq t^Y_i(\theta, \theta'), \quad \text{and} \quad \theta'>\hat \theta \implies t^X_i(\theta, \theta') \geq t^Y_i(\theta, \theta').
\]

\noindent If $(X, Y)$ satisfies NWSCC and RRC, then $\mathcal{R}^{\max}(t^X) \geq \mathcal{R}^{\max}(t^Y)$.

\end{proposition}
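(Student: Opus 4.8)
The plan is to mirror the proofs of Theorems \ref{thm:restriction} and \ref{thm:methodology}, replacing the worst case by the best case and decreasing rearrangements by increasing ones. One is tempted to write $\mathcal{R}^{\max}(t) = -\min_{Q \in \mathcal{Q}} \iint_{\Theta^2}[-t_1-t_2]\,dQ$ and apply Theorem \ref{thm:methodology} to $-t^X$ and $-t^Y$, which satisfy WSCC precisely because $t^X,t^Y$ satisfy NWSCC. This shortcut fails: the transfer $-t$ violates Assumption \ref{assum:t}, since its total transfer is negative and \emph{decreasing}, so Theorem \ref{thm:restriction} does not apply to it. I would therefore redo the two-step argument directly.

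First I would establish the best-case analogue of Theorem \ref{thm:restriction}. Writing $\mathcal{Q}^{**} := \{Q^{**} \in \mathcal{Q} : dQ^{**}/dP \text{ increases in each argument}\}$ for the set of increasing rearrangements in $\mathcal{Q}$, and $T := t_1(\theta,\theta')+t_2(\theta',\theta)$, the claim is $\mathcal{R}^{\max}(t) = \max_{Q^{**} \in \mathcal{Q}^{**}} \iint_{\Theta^2} T \, dQ^{**}$. This follows from the reverse Hardy--Littlewood inequality: by an argument symmetric to Proposition \ref{prop:anti_comonotone} (either adapting its proof, or applying it to $M-T$ for $M \geq \sup T$ when $T$ is bounded, so as to stay within its hypotheses), every $Q \in \mathcal{Q}$ admits a rearrangement $Q^{**}$ with $dQ^{**}/dP$ comonotone with $T$ and $\iint T\,dQ^{**} \geq \iint T\,dQ$. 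Since $T$ increases in each argument by Assumption \ref{assum:t}(i), comonotonicity makes $dQ^{**}/dP$ increasing in each argument, so $Q^{**} \in \mathcal{Q}^{**}$ by rearrangement invariance; hence the maximum over $\mathcal{Q}$ reduces to the maximum over $\mathcal{Q}^{**}$. Under Assumption \ref{assum:Q_indep_IID} the same reduction holds after mirroring Proposition \ref{prop:rearrangement_ineq_IID} to produce an increasing, independence-(or IID-)preserving rearrangement that raises the expectation of $T$.

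Second I would prove the best-case analogue of inequality \eqref{eq:XY_interim_revenue_comparison}: for every $Q^{**} \in \mathcal{Q}^{**}$, every $i$ and $\theta$,
\[
\int_\Theta t^X_i(\theta,\theta')\,Q^{**}(d\theta'|\theta) \geq \int_\Theta t^Y_i(\theta,\theta')\,Q^{**}(d\theta'|\theta).
\]
Fix $Q^{**}, i, \theta$ and set $g(\theta') := t^X_i(\theta,\theta') - t^Y_i(\theta,\theta')$. Because $dQ^{**}/dP$ increases in each argument, the conditional likelihood ratio $\ell(\theta') := dQ^{**}(\cdot|\theta)/dP(\cdot|\theta)$ is proportional to $dQ^{**}/dP(\theta,\cdot)$ and hence increases in $\theta'$, so $Q^{**}(\cdot|\theta)$ dominates $P(\cdot|\theta)$ in the likelihood-ratio order. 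NWSCC makes $g$ cross zero once from below at the threshold $\hat\theta$ (nonpositive below, nonnegative above), so $g(\theta')\,(\ell(\theta')-\ell(\hat\theta)) \geq 0$ pointwise; integrating against $P(\cdot|\theta)$ gives $\int_\Theta g\,\ell\,dP(\cdot|\theta) \geq \ell(\hat\theta)\int_\Theta g\,dP(\cdot|\theta) \geq 0$, where the last inequality uses $\ell(\hat\theta) \geq 0$ together with RRC. This is the displayed interim inequality. Integrating it over the bidder's own type against the corresponding marginal of $Q^{**}$ and summing over $i=1,2$ yields $\iint_{\Theta^2} T^X\,dQ^{**} \geq \iint_{\Theta^2} T^Y\,dQ^{**}$ for all $Q^{**} \in \mathcal{Q}^{**}$. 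Taking the maximum over $\mathcal{Q}^{**}$ and invoking the best-case restriction of the first step gives $\mathcal{R}^{\max}(t^X) \geq \mathcal{R}^{\max}(t^Y)$.

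The single-crossing integration step is routine, being the one behind Theorem \ref{thm:methodology} with all inequalities reflected. The main obstacle I anticipate is the first step under Assumption \ref{assum:Q_indep_IID}: one must exhibit an \emph{increasing} rearrangement that simultaneously preserves the independence (or IID) structure and raises $\iint T\,dQ$, by re-running the construction of Proposition \ref{prop:rearrangement_ineq_IID} in the comonotone direction and verifying that it lands in $\mathcal{S} = \Delta^{Ind}(\Theta^2,P)$ (or $\Delta^{IID}(\Theta^2,P)$), so that rearrangement invariance relative to $\mathcal{S}$ keeps it inside $\mathcal{Q}$.
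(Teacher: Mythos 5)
Your proposal is correct and takes essentially the approach the paper intends: Proposition \ref{prop:ambiguity_seeking} is stated without an explicit proof precisely because it follows by mirroring Theorems \ref{thm:restriction} and \ref{thm:methodology} (comonotone/increasing rearrangements in place of anti-comonotone/decreasing ones, with all single-crossing and interim inequalities reflected), which is exactly what you carry out, including the correct mirrored analogues of Propositions \ref{prop:anti_comonotone} and \ref{prop:rearrangement_ineq_IID} and the pointwise inequality $g(\theta')\,(\ell(\theta')-\ell(\hat\theta)) \geq 0$ driving the interim comparison. Your preliminary observation is also sound: the shortcut via $-t$ is blocked because $-t$ violates Assumption \ref{assum:t} (and would in any case flip the direction of RRC and of the conclusion), so the reflection must be applied to the rearrangement direction rather than to the transfers.
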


As an immediate consequence, revenue comparisons between the four auctions yield opposite results to the ambiguity aversion case: when $P$ is IID, (i) the war of attrition outperforms the second-price and all-pay auctions, and (ii) the second-price and all-pay auctions outperform the first-price auction. In other words, the best-case revenue rankings between the four auctions reproduce the expected revenue rankings in the affiliated values setup \citep{Mil82, Kris97}.

Because the two rankings are identical, unlike the ambiguity aversion case in Section \ref{sec:linkage}, the best-case revenue rankings extend to the case in which $P$ is symmetric and affiliated. Specifically, using the equilibrium bidding strategies in the affiliated values setup (\citealp{Mil82}, Thm. 6 and 14; \citealp{Kris97}, Thm. 1-2), it can be shown that the pairs $(X, Y) = (II, I), (A, I), (W, A), (W, II)$ satisfy NWSCC. In addition, the proofs of the expected revenue rankings in the affiliated values setup (\citealp{Mil82}, Thm. 15; \citealp{Kris97}, Thm. 3-5) show that the same rankings also hold for interim expected revenues; this implies that the above pairs of auctions satisfy RRC. Thus, by Proposition \ref{prop:ambiguity_seeking}, the best-case revenue rankings remain valid when $P$ is symmetric and affiliated.

\section{Discussion} \label{sec:discussion}

\subsection{Related literature} \label{subsec:literature}

\noindent \textit{Auctions with ambiguity.} This paper is most closely related to the literature on auctions with ambiguity. While existing works mainly focus on the bidders' ambiguity aversion \citep{Bo09, Bodoh12, Ghosh21, Aus22}, \citet[Sec. 6]{Bo06} show that when the seller is ambiguity averse and bidders are ambiguity neutral, the optimal mechanism is a \textit{seller-full-insurance auction} where the total transfer is constant in the type profile. \citet[Sec. 3]{Bo06} also show that when the seller and bidders are both ambiguity averse but the seller is less averse than the bidders, the optimal mechanism is a \textit{bidder-full-insurance auction} where a bidder's payoff is constant with respect to the competitor's type. However, because these two mechanisms depend on the bidders' beliefs, they are difficult to implement in practice and hence rarely used in reality \citep{Wil87}. We complement their results by comparing easily implementable auctions. Also, under a specific parametrization of the set of priors (Example \ref{ex:Indep_IID} (b-IID)), \cite{Lo98} compares the first-price and second-price auctions. As mentioned in Section \ref{sec:ranking_4}, our paper includes this result as a special case.

\medskip

\noindent \textit{Robust auction design.} This paper is also related to the robust auction design literature \citep{Berg17, Berg19, Brooks21, Che22, Suz22, He22}. Our assumption on the seller's ambiguity set differs from this literature. Existing works consider the minimum expected revenues over (i) all information structures between valuations and signals with a given valuation distribution \citep{Berg17, Berg19, Brooks21}, (ii) all valuation distributions satisfying certain moment conditions \citep{Che22, Suz22}, or (iii) all correlation structures between valuations with a given marginal valuation distribution \citep{He22}. By contrast, our set of priors consists of beliefs that are close to the reference belief, the so-called \textit{discrepancy-based} model \citep{Rah19}. Despite its popularity in other strands of the literature---e.g., the macroeconomics literature on model misspecification \citep{Han01, Han08} and the operations research literature on robust optimization \citep{Ben13}---the discrepancy-based model has been less frequently used in the literature on auctions where the seller has limited information about the valuation distribution. Our study shows that interesting revenue comparison results---those opposite to the Linkage Principle results---arise for discrepancy-based sets of priors.

\subsection{Conclusion}

This paper studies the revenue comparison problem of auctions when the seller has an MMEU preference. Assuming rearrangement invariance of the set of priors, we develop a methodology to compare the worst-case revenues. As an application, we compare the worst-case revenues of four commonly studied auctions: the first-price, second-price, all-pay auctions and war of attrition. Our methodology yields results opposite to those of the Linkage Principle.

Although this paper focuses on the four auctions, our methodology applies to a broader range of mechanisms. For instance, \cite{Sie10} studies a mechanism in which the winner pays her bid and the loser pays a fixed fraction of her bid, called a \textit{simple contest}. This mechanism can be regarded as a convex combination of the first-price and all-pay auctions. Applying Theorem \ref{thm:methodology}, it can be shown that the worst-case revenue of a simple contest decreases in the fraction of the bid paid by the loser. In other words, the closer a simple contest is to the first-price auction (equivalently, the farther it is from the all-pay auction), the higher the worst-case revenue it generates. Similar conclusions hold for the convex combinations of the other auction pairs studied in Section \ref{sec:ranking_4}.

Following most of the literature on auctions with ambiguity, our paper assumes that the seller has an MMEU preference. However, our results carry over to the setup in which the seller has an \textit{uncertainty averse preference}, a generalization of the MMEU preference axiomatized by \cite{Cer11}. Under rearrangement invariance assumptions analogous to Assumptions \ref{assum:Q}-\ref{assum:Q_indep_IID} \citep[see][Sec. 4.1]{Cer11}, it is straightforward to extend Theorems \ref{thm:restriction}-\ref{thm:ranking}. In particular, uncertainty averse preferences include \textit{divergence preferences} as a special case \citep{Macc06}, represented by
\[
\mathcal{R}^\text{div}(t) := \min_{Q \in \mathcal{Q}} \left[ \iint_{\Theta^2} [t_1(\theta, \theta')+t_2(\theta', \theta)] Q(d\theta, d\theta') + \frac{1}{\eta} D(Q||P) \right],
\]
where $D$ is defined in Example \ref{ex:Q} (a) and $\eta$ represents the degree of ambiguity. This preference, along with the MMEU preference, is one of the most popular models in the robustness literature \citep{Han01, Han08}.

\bigskip

\bigskip

\appendix

\noindent\textbf{\large Appendix}

\section{Rearrangement inequalities}

Throughout this section, we repeatedly use the following well-known fact in probability theory. Suppose that $\mu$ is an atomless probability measure on $\mathbb{R}_+$, and $G: \mathbb{R}_+ \rightarrow [0, 1]$ is the cumulative distribution of $\mu$:
\[
\text{for all $c \in \mathbb{R}_+$,} \quad G(c) := \mu \{ z \in \mathbb{R}_+: z \leq c \}.
\]
Then,
\begin{equation} \label{eq:cdf_dist_unif}
\text{for all $c \in [0, 1]$,} \quad \mu \{ z \in \mathbb{R}_+: G(z) \leq c \} = c.
\end{equation}
In other words, if $z$ is distributed according to $\mu$, then $G(z)$ is uniformly distributed over $[0, 1]$.

Also, for an increasing function $H: \mathbb{R}_+ \rightarrow [0, 1]$ such that $H(0) = 0$ and $\lim_{c \rightarrow \infty} H(c) = 1$, define its \textit{right-continuous inverse} $H^{-1}: [0, 1] \rightarrow \mathbb{R}_+$ as follows \citep[see, e.g.,][Sec. A.3]{Follmer16}:
\[
\text{for all $c \in [0, 1]$,} \quad H^{-1}(c) := \sup \{ z \in \mathbb{R}_+: H(z) \leq c \}.
\]

\subsection{Proof of Proposition \ref{prop:anti_comonotone}} \label{appen:anti_comonotone}

\noindent \textbf{(i)} Let $F_T, F_{dQ/dP}: \mathbb{R}_+ \rightarrow [0, 1]$ be the cumulative distributions of $T$ and $dQ/dP$:
\begin{align}
    \text{for all $c \in \mathbb{R}_+$,} \quad &F_T(c) := P \{ (\theta, \theta'): T(\theta, \theta') \leq c \} \notag \\
    &F_{dQ/dP}(c) := P \{ (\theta, \theta'): \frac{dQ}{dP}(\theta, \theta') \leq c \}. \label{eq:def_F_dQ_dP}
\end{align}
Applying equation \eqref{eq:cdf_dist_unif} with $\mu(E) := P \{ (\theta, \theta'): T(\theta, \theta') \in E \}$ and $G := F_T$ (note that $\mu$ is atomless by condition \eqref{eq:T_atomless}), we have
\begin{align} \label{eq:1_FT_T_unif}
&\text{for all $c \in [0, 1]$}, \quad \mu \{ z \in \mathbb{R}_+: F_T(z) \leq c \} = c \notag \\
\implies \quad &\text{for all $c \in [0, 1]$}, \quad P \{ (\theta, \theta'): F_T(T(\theta, \theta')) \leq c \} = c \notag \\
\implies \quad &\text{for all $c \in [0, 1]$}, \quad P \{ (\theta, \theta'): 1-F_T(T(\theta, \theta')) \leq c \} = c.
\end{align}
That is, if $(\theta, \theta')$ is distributed according to $P$, then $1-F_T(T(\theta, \theta'))$ is uniformly distributed over $[0, 1]$.

Now, define $Q^* \in \Delta(\Theta^2, P)$ as
\begin{equation} \label{eq:anti_comonotone_construction}
\frac{dQ^*}{dP}(\theta, \theta') := F_{dQ/dP}^{-1} \Big( 1-F_T(T(\theta, \theta')) \Big).
\end{equation}
Also, define $F_{dQ^*/dP}$ analogously as in equation \eqref{eq:def_F_dQ_dP}. Then, by Lemma A.23 of \cite{Follmer16}, equations \eqref{eq:1_FT_T_unif}-\eqref{eq:anti_comonotone_construction} imply $F_{dQ^*/dP} = F_{dQ/dP}$. This means that $Q^*$ is a rearrangement of $Q$. In addition, equation \eqref{eq:anti_comonotone_construction} shows that $dQ^*/dP$ is a decreasing function of $T$. This implies condition \eqref{eq:anti_comonotone_strong}. $\square$

\medskip

\noindent \textbf{(ii)} By Theorem A.28 of \cite{Follmer16},
\begin{align*}
\iint_{\Theta^2} T dQ^* &= \iint_{\Theta^2} T \frac{dQ^*}{dP} dP = \int_0^1 F_T^{-1}(1-z) F_{dQ^*/dP}^{-1}(z) dz \\
\iint_{\Theta^2} T dQ &= \iint_{\Theta^2} T \frac{dQ}{dP} dP \geq \int_0^1 F_T^{-1}(1-z) F_{dQ/dP}^{-1}(z) dz.
\end{align*}
Since $F_{dQ^*/dP} = F_{dQ/dP}$, it follows that $\iint_{\Theta^2} T dQ^* \leq \iint_{\Theta^2} T dQ$. $\square$

\subsection{Proof of Proposition \ref{prop:rearrangement_ineq_IID}} \label{appen:rearrangement_ineq_IID}

\begin{proof} [Proof of Proposition \ref{prop:rearrangement_ineq_IID} (i)]

Let $i \in \{1, 2\}$. It is evident from Appendix \ref{appen:anti_comonotone} that Proposition \ref{prop:anti_comonotone} (i) holds even if we replace $(\Theta^2, P)$ with $(\Theta, P_i)$. Applying this result with $T: \Theta \rightarrow \mathbb{R}_+$ defined as $T(\theta) \equiv \theta$, there exists a rearrangement $Q_i^* \in \Delta(\Theta, P_i)$ of $Q_i$ satisfying condition \eqref{eq:anti_comonotone_strong} (more precisely, its one-dimensional version). Since $T$ is increasing, condition \eqref{eq:anti_comonotone_strong} implies that $dQ_i^*/dP$ is decreasing. Now, let $Q^* := Q^*_{1} \times Q^*_{2}$. Then, $Q^*$ is an independent and decreasing rearrangement of $Q$. Also, by construction, if $P$ and $Q$ are IID, then $Q^*$ is IID.

\end{proof}

To prove Proposition \ref{prop:rearrangement_ineq_IID} (ii), as in the proofs of classical rearrangement inequalities \citep[][Ch. 3]{Lieb01}, we first consider the simplest case where $T$, $dQ/dP$ and $dQ^*/dP$ are indicator functions:
\begin{equation} \label{eq:simple_T_Q_Q*}
    T := \mathbf{1}_U, \quad \frac{dQ}{dP} := \frac{\mathbf{1}_{A_1 \times A_2}}{P(A_1 \times A_2)}, \quad \text{and} \quad \frac{dQ^*}{dP} := \frac{\mathbf{1}_{A_1^*\times A_2^*}}{P(A_1^*\times A_2^*)},
\end{equation}
where $U \subset \Theta^2$ and $A_i, A_i^* \subset \Theta$. Then, Proposition \ref{prop:rearrangement_ineq_IID} (ii) reduces to Lemma \ref{lem:rearrangement_ineq_simple}:

\begin{lemma} \label{lem:rearrangement_ineq_simple}

Assume $P$ is independent. Suppose that $U \subset \Theta^2$ is an event such that
\begin{equation} \label{eq:U_upper_set}
\theta_L \leq \theta_H, \theta_L' \leq \theta_H' \text{ and } (\theta_L, \theta'_L) \in U \implies (\theta_H, \theta_H') \in U.
\end{equation}
Let $A_1, A_2 \subset \Theta$ be events with $P(A_1 \times A_2)>0$. Also, let $A_1^*, A_2^* \subset \Theta$ be intervals with left endpoint $0$ satisfying $P_1(A_1^*)=P_1(A_1)$ and $P_2(A_2^*)=P_2(A_2)$. Then,
\begin{equation} \label{eq:rearrangement_ineq_simple}
P(U \cap (A_1^* \times A_2^*)) \leq P(U \cap (A_1 \times A_2)).\footnote{Note that if equation \eqref{eq:simple_T_Q_Q*} holds, then
\[
\iint_{\Theta^2} T dQ = \frac{P(U \cap (A_1 \times A_2))}{P(A_1 \times A_2)} \quad \text{and} \quad \iint_{\Theta^2} T dQ^* = \frac{P(U \cap (A_1^* \times A_2^*))}{P(A_1^* \times A_2^*)}.
\]
Since $P(A_1^* \times A_2^*) = P(A_1 \times A_2)$, inequality $\iint_{\Theta^2} T dQ^* \leq \iint_{\Theta^2} T dQ$ reduces to inequality \eqref{eq:rearrangement_ineq_simple}.}
\end{equation}

\end{lemma}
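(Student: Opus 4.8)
The plan is to reduce the claimed two-dimensional inequality to a one-dimensional monotone rearrangement inequality applied once in each coordinate, exploiting the product structure $P = P_1 \times P_2$. The one-dimensional fact I will use is the following: for the marginal $P_i$, any increasing $h: \Theta \to \mathbb{R}_+$, any measurable $A \subset \Theta$, and the left-aligned interval $A^* = [0, a]$ with $P_i(A^*) = P_i(A)$, we have $\int_{A^*} h \, dP_i \leq \int_A h \, dP_i$. This is a bathtub-principle statement: among sets of fixed $P_i$-measure, the integral of an increasing $h$ is smallest on the lower interval, where $h$ takes its smallest values. I would prove it directly by setting $\phi := \mathbf{1}_A - \mathbf{1}_{A^*}$, noting $\int_\Theta \phi \, dP_i = 0$ together with $\phi \leq 0$ on $[0, a]$ and $\phi \geq 0$ on $(a, 1]$, and then writing $\int_\Theta \phi \, h \, dP_i = \int_\Theta \phi \, (h - h(a)) \, dP_i \geq 0$, since on each of the two pieces the factors $\phi$ and $h - h(a)$ share the same sign. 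The existence of such an $A^*$ is guaranteed because $P_i$ has a positive density and is therefore atomless, so $a \mapsto P_i([0,a])$ is continuous.

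With this tool in hand, I would first rewrite the two quantities via Fubini as $P(U \cap (A_1 \times A_2)) = \int_{A_1} g(\theta) \, P_1(d\theta)$ and $P(U \cap (A_1^* \times A_2^*)) = \int_{A_1^*} g^*(\theta) \, P_1(d\theta)$, where $g(\theta) := \int_{A_2} \mathbf{1}_U(\theta, \theta') \, P_2(d\theta')$ and $g^*(\theta) := \int_{A_2^*} \mathbf{1}_U(\theta, \theta') \, P_2(d\theta')$. The argument then proceeds by replacing one coordinate at a time. For the second coordinate, fix $\theta$; the up-set property \eqref{eq:U_upper_set} makes $\theta' \mapsto \mathbf{1}_U(\theta, \theta')$ increasing, so the one-dimensional inequality gives $g^*(\theta) \leq g(\theta)$ for every $\theta$, and integrating over $A_1$ yields $P(U \cap (A_1 \times A_2^*)) \leq P(U \cap (A_1 \times A_2))$. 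For the first coordinate, the same up-set property makes $\theta \mapsto \mathbf{1}_U(\theta, \theta')$ increasing for each fixed $\theta'$, hence $g^*$ is increasing in $\theta$; applying the one-dimensional inequality once more, now in the first coordinate with $A_1^*$ left-aligned of the same $P_1$-measure as $A_1$, gives $P(U \cap (A_1^* \times A_2^*)) = \int_{A_1^*} g^* \, dP_1 \leq \int_{A_1} g^* \, dP_1 = P(U \cap (A_1 \times A_2^*))$. Chaining the two inequalities proves the claim.

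The conceptual crux, and the step I expect to require the most care, is the two-stage coordinate-by-coordinate replacement. One cannot rearrange both coordinates at once, since $A_1^* \times A_2^*$ is generally not reachable from $A_1 \times A_2$ by a single measure-preserving move that preserves the product form. The observation that makes the sequential argument go through is that after replacing $A_2$ by the left-aligned $A_2^*$, the resulting partial integral $g^*(\theta)$ remains \emph{increasing} in $\theta$ (precisely because $U$ is increasing in its first argument as well), so the one-dimensional inequality applies a second time. The remaining points are routine: joint measurability of $\mathbf{1}_U$ (as $U$ is an up-set, hence Borel), measurability of $g$ and $g^*$, and the existence of the left-aligned intervals, all of which follow from $P$ being a product of atomless marginals with positive density.
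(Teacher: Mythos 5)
Your proof is correct, and it takes a genuinely different route from the paper's. You argue by iterating a one-dimensional bathtub inequality coordinate-by-coordinate: first replace $A_2$ by the left-aligned $A_2^*$ using that $\theta' \mapsto \mathbf{1}_U(\theta, \theta')$ is increasing for each fixed $\theta$, then observe that the partial integral $g^*(\theta) = \int_{A_2^*} \mathbf{1}_U(\theta, \theta')\, P_2(d\theta')$ is \emph{still increasing} in $\theta$ (the crux, which you correctly flag), and apply the same one-dimensional inequality in the first coordinate; your sign-splitting proof of that one-dimensional step, via $\int_\Theta \phi \,(h - h(a))\, dP_i \geq 0$, is sound. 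The paper instead proceeds by an explicit transport argument: it first reduces to uniform marginals via the CDF transformations (Step 1), then defines the maps $m_i(\theta) := P_i([0, \theta] \cap A_i)$ and shows they push the normalized restriction of $P$ to $A_1 \times A_2$ forward to the normalized restriction to $A_1^* \times A_2^*$ (Step 2), and finally concludes from $m_i(\theta) \leq \theta$ and the up-set property \eqref{eq:U_upper_set} that $\mathbf{1}_U(m_1(\theta), m_2(\theta')) \leq \mathbf{1}_U(\theta, \theta')$, whence the inequality follows by change of variables (Step 3). Your route is more elementary---it needs no reduction to the uniform case and no pushforward or change-of-variables machinery, works directly with general atomless marginals, and extends immediately to $n$ coordinates by induction---while the paper's construction supplies an explicit measure-preserving coupling, which is exactly what underlies the geometric intuition of Panel B in Figure \ref{fig:rearrangement_ineq_simple}. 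One small inaccuracy to correct: an up-set in $\Theta^2$ need not be Borel (its sections are intervals, but the boundary ``staircase'' can carry an arbitrary, possibly non-Borel, subset), so you should not justify measurability of $\mathbf{1}_U$ on those grounds; this is harmless here, since the lemma already assumes $U$ is an event, after which measurability of $g$ and $g^*$ follows from Tonelli's theorem.
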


\noindent Panel A of Figure \ref{fig:rearrangement_ineq_simple} illustrates Lemma \ref{lem:rearrangement_ineq_simple}. As we show later, once Lemma \ref{lem:rearrangement_ineq_simple} is established, Proposition \ref{prop:rearrangement_ineq_IID} (ii) follows by a standard argument.

\begin{figure} [t]
\centering
\includegraphics[scale=1]{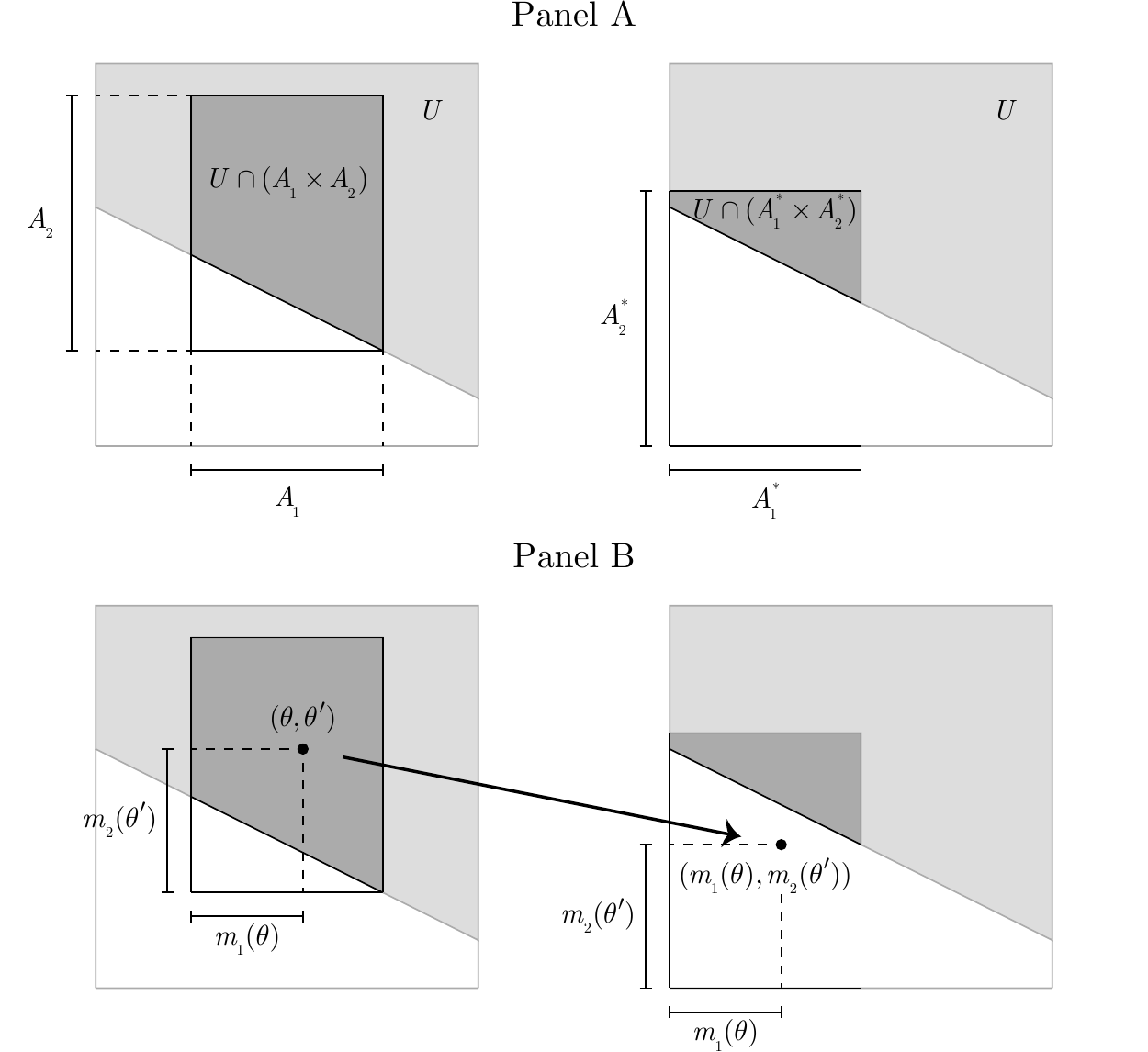}
\caption{\textbf{Lemma \ref{lem:rearrangement_ineq_simple}.} Let $P$ be uniform over $\Theta^2$. By assumption, $A_i$ and $A_i^*$ have the same total length. Panel A illustrates inequality \eqref{eq:rearrangement_ineq_simple}. The intersection of $A_1^* \times A_2^*$ with $U$ has a smaller area than that of $A_1 \times A_2$. Panel B illustrates Step 2 of the proof. Intuitively, the position of $(\theta, \theta')$ relative to $A_1 \times A_2$ equals that of $(m_1(\theta), m_2(\theta'))$ relative to $A_1^* \times A_2^*$. Hence, if $(\theta, \theta')$ drawn according to $Q$ (uniform over the left rectangle), then $(m_1(\theta), m_2(\theta'))$ is drawn according to $Q^*$ (uniform over the right rectangle).} \label{fig:rearrangement_ineq_simple}
\end{figure}

\begin{proof}[Proof of Lemma \ref{lem:rearrangement_ineq_simple}]

We proceed in three steps.

\noindent \textbf{Step 1.} \textit{Without loss of generality, we can assume $P$ is uniform over $\Theta$.}

Let $\lambda$ be the uniform probability measure on $\Theta$. Also, for $i \in \{1, 2\}$, denote the cumulative distribution of $P_i$ as $F_i: \Theta \rightarrow [0, 1]$. Define
\[
\begin{array} {ll}
\widehat U := \{ (F_1(\theta), F_2(\theta')): (\theta, \theta') \in U \} & \\
\widehat A_1 := \{ F_1(\theta): \theta \in A_1 \} & \quad \widehat A^*_1 := \{ F_1(\theta): \theta \in A_1^* \} \\
\widehat A_2 := \{ F_2(\theta'): \theta' \in A_2 \} & \quad \widehat A_2^* := \{ F_2(\theta'): \theta' \in A^*_2 \}.
\end{array}
\]
By equation \eqref{eq:cdf_dist_unif}, if $\theta$ is distributed according to $F_i$, then $F_i(\theta)$ is distributed according to $\lambda$. Using this fact, it is straightforward to verify the following:

(i) $\widehat U$ satisfies condition \eqref{eq:U_upper_set}.

(ii) $\widehat A_i^*$ is an interval with left endpoint $0$ satisfying $\lambda(\widehat A_i^*) = \lambda(\widehat A_i)$.

(iii) Inequality \eqref{eq:rearrangement_ineq_simple} is equivalent to 
\[
(\lambda \times \lambda) (\widehat U \cap (\widehat A_1^* \times \widehat A_2^*)) \leq (\lambda \times \lambda) (\widehat U \cap (\widehat A_1 \times \widehat A_2)).
\]
Hence, by replacing $U$, $A_1$, $A_2$ and $P$ with $\widehat U$, $\widehat A_1$, $\widehat A_2$ and $\lambda \times \lambda$, respectively, we can always assume $P = \lambda \times \lambda$. $\blacksquare$

\medskip

\noindent \textbf{Step 2.} \textit{For $i \in \{1, 2\}$, define $m_i: \Theta \rightarrow A_i^*$ as
\begin{equation} \label{eq:def_m}
    m_i(\theta) := P_i([0, \theta] \cap A_i).
\end{equation}
Also, define $Q, Q^* \in \Delta(\Theta^2, P)$ as in equation \eqref{eq:simple_T_Q_Q*}. Then, for all event $E \subset \Theta^2$, 
\begin{equation} \label{eq:Q_image_Q*}
Q \{ (\theta, \theta'): (m_1(\theta), m_2(\theta')) \in E \} = Q^*(E).
\end{equation}
}

Panel B of Figure \ref{fig:rearrangement_ineq_simple} illustrates Step 2. First, since $P_i$ is assumed to be uniform over $\Theta$ (Step 1), by definition \eqref{eq:simple_T_Q_Q*},
\begin{equation} \label{eq:rearrangement_ineq_simple_step2_Q*}
\text{for $c \in [0, P_i(A_i)]$,} \quad Q_i^*([0, c]) = \frac{P_i([0, c] \cap A_i^*)}{P_i(A_i^*)} = \frac{P_i([0, c])}{P_i(A_i^*)} = \frac{c}{P_i(A_i)}.
\end{equation}
Next, by definitions \eqref{eq:simple_T_Q_Q*} and \eqref{eq:def_m}, $m_i/P_i(A_i)$ is the cumulative distribution of $Q_i$. Hence, by equation \eqref{eq:cdf_dist_unif},
\begin{equation} \label{eq:rearrangement_ineq_simple_step2_Qm}
    \text{for $c \in [0, P_i(A_i)]$,} \quad Q_i \{ \theta: \frac{m_i(\theta)}{P_i(A_i)} \leq \frac{c}{P_i(A_i)} \} = \frac{c}{P_i(A_i)}.
\end{equation}
Then, equations \eqref{eq:rearrangement_ineq_simple_step2_Q*}-\eqref{eq:rearrangement_ineq_simple_step2_Qm} imply
\[
\text{for $c \in [0, P_i(A_i)]$,} \quad Q_i \{ \theta: m_i(\theta) \in [0, c] \} = Q_i^*([0, c]).
\]
If two measures coincide for sets of the form $[0, c]$, they must be equal. Hence,
\[
\text{for all event $E \subset \Theta$,} \quad Q_i \{ \theta: m_i(\theta) \in E \} = Q_i^*(E).
\]
By independence, equation \eqref{eq:Q_image_Q*} holds. $\blacksquare$

\medskip

\noindent \textbf{Step 3.} \textit{The desired inequality \eqref{eq:rearrangement_ineq_simple} holds.}

First, by definition \eqref{eq:simple_T_Q_Q*},
\begin{align} \label{eq:P_U_A1*_A2*}
    P(U \cap (A_1^* \times A_2^*)) &= P(A_1^* \times A_2^*) \iint_{\Theta^2} \mathbf{1}_U(x, y) \frac{\mathbf{1}_{A_1^* \times A_2^*}(x, y)}{P(A_1^* \times A_2^*)} P(dx, dy) \notag \\
    &= P(A_1^* \times A_2^*) \iint_{\Theta^2} \mathbf{1}_U(x, y) Q^*(dx, dy).
\end{align}
Next, Step 2 shows that if $(\theta, \theta')$ is distributed according to $Q$, then $(x, y) := (m_1(\theta), m_2(\theta'))$ is distributed according to $Q^*$.\footnote{More precisely, $Q^*$ is the image measure of $Q$ induced by $(\theta, \theta') \mapsto (m_1(\theta), m_2(\theta'))$.} By the change of variables formula for Lebesgue integration \citep[][Thm. 7 of Sec. II.6]{Shi96},
\begin{equation}
    \iint_{\Theta^2} \mathbf{1}_U(x, y) Q^*(dx, dy) = \iint_{\Theta^2}  \mathbf{1}_U(m_1(\theta), m_2(\theta')) Q(d\theta, d\theta').
\end{equation}
Now, by definition, $m_i(\theta) \leq \theta$. By property \eqref{eq:U_upper_set}, whenever $(m_1(\theta), m_2(\theta')) \in U$, we have $(\theta, \theta') \in U$. It follows that $\mathbf{1}_U(m_1(\theta), m_2(\theta')) \leq \mathbf{1}_U(\theta, \theta')$. Hence
\begin{equation}
     \iint_{\Theta^2}  \mathbf{1}_U(m_1(\theta), m_2(\theta')) Q(d\theta, d\theta') \leq \iint_{\Theta^2}  \mathbf{1}_U(\theta, \theta') Q(d\theta, d\theta').
\end{equation}
Finally, by the same argument as in equation \eqref{eq:P_U_A1*_A2*},
\begin{equation} \label{eq:P_U_A1_A2}
    P(U \cap (A_1 \times A_2)) = P(A_1 \times A_2) \iint_{\Theta^2} \mathbf{1}_U(\theta, \theta') Q(d\theta, d\theta').
\end{equation}
Since $P(A_1^* \times A_2^*) = P(A_1 \times A_2)$, \eqref{eq:P_U_A1*_A2*}-\eqref{eq:P_U_A1_A2} yields inequality \eqref{eq:rearrangement_ineq_simple}.

\end{proof}

\begin{proof} [Proof of Proposition \ref{prop:rearrangement_ineq_IID} (ii)]

By the Layer Cake Representation \citep[see, e.g.,][Sec. 1.13 and Sec. 3.4]{Lieb01} and Fubini's theorem,
\begin{align*}
&\iint_{\Theta^2} T(\theta, \theta') Q(d\theta, d\theta') = \iint_{\Theta^2} T(\theta, \theta') \frac{dQ_1}{dP_1}(\theta) \frac{dQ_2}{dP_2}(\theta') P(d\theta, d\theta') \\
&= \iint_{\Theta^2} \left[ \int_{\mathbb{R}_+^3} \mathbf{1}[T(\theta, \theta')>x] \mathbf{1}[\frac{dQ_1}{dP_1}(\theta)>y] \mathbf{1}[\frac{dQ_2}{dP_2}(\theta')>z] dxdydz \right] P(d\theta, d\theta') \\
&= \int_{\mathbb{R}_+^3} \left[ \iint_{\Theta^2} \mathbf{1}[T(\theta, \theta')>x, \frac{dQ_1}{dP_1}(\theta)>y, \frac{dQ_2}{dP_2}(\theta')>z] P(d\theta, d\theta') \right] dxdydz \\
&= \int_{\mathbb{R}_+^3} P ( \{(\theta, \theta'): T(\theta, \theta')>x\} \cap \{(\theta, \theta'): \frac{dQ_1}{dP_1}(\theta)>y, \frac{dQ_2}{dP_2}(\theta')>z\} ) dxdydz.
\end{align*}
By the same reason,
\begin{align*}
&\iint_{\Theta^2} T(\theta, \theta') Q^*(d\theta, d\theta') \\
&= \int_{\mathbb{R}_+^3} P ( \{(\theta, \theta'): T(\theta, \theta')>x\} \cap \{(\theta, \theta'): \frac{dQ^*_{1}}{dP_1}(\theta)>y, \frac{dQ^*_{2}}{dP_2}(\theta')>z\} ) dxdydz.
\end{align*}

Therefore, to prove the desired inequality $\iint_{\Theta^2} T dQ^* \leq \iint_{\Theta^2} T dQ$, it suffices to prove the following: for all $x, y, z \in \mathbb{R}_+$,
\begin{align}
&P ( \{(\theta, \theta'): T(\theta, \theta')>x\} \cap \{(\theta, \theta'): \frac{dQ^*_{1}}{dP_1}(\theta)>y, \frac{dQ^*_{2}}{dP_2}(\theta')>z\} ) \notag \\
&\leq P ( \{(\theta, \theta'): T(\theta, \theta')>x\} \cap \{(\theta, \theta'): \frac{dQ_1}{dP_1}(\theta)>y, \frac{dQ_2}{dP_2}(\theta')>z\} ). \label{eq:rearrangement_ineq_proof_key}
\end{align}
To prove inequality \eqref{eq:rearrangement_ineq_proof_key}, fix $x, y, z \in \mathbb{R}_+$. Let
\[
\begin{array}{ll}
    U := \{ (\theta, \theta'): T(\theta, \theta') > x \} & \\
    A_1 := \{ \theta: \frac{dQ_1}{dP_1}(\theta)>y \} & \quad A_2 := \{ \theta': \frac{dQ_2}{dP_2}(\theta')>z \} \\
    A_1^* := \{ \theta: \frac{dQ^*_1}{dP_1}(\theta)>y \} & \quad A_2^* := \{ \theta': \frac{dQ_2^*}{dP_2}(\theta')>z \}.
\end{array}
\]
If $P(A_1 \times A_2) = 0$, inequality \eqref{eq:rearrangement_ineq_proof_key} holds because both sides are zero. Next, suppose $P(A_1 \times A_2) > 0$. We show that the hypothesis of Lemma \ref{lem:rearrangement_ineq_simple} holds. First, because $T$ is increasing, $U$ satisfies condition \eqref{eq:U_upper_set}. Also, recall from the proof of Proposition \ref{prop:rearrangement_ineq_IID} (i) that $Q^*_i$ is a decreasing rearrangement of $Q_i$. This implies that $P_i(A_i^*) = P_i(A_i)$, and $A_i^*$ is an interval with left endpoint zero. Thus, by Lemma \ref{lem:rearrangement_ineq_simple}, inequality \eqref{eq:rearrangement_ineq_proof_key} holds.

\end{proof}

\section{Proof of Theorem \ref{thm:methodology}} \label{appen:methodology}

Let $Q^* \in \mathcal{Q}^*$ be given. As argued in Section \ref{sec:methodology}, to prove Theorem \ref{thm:methodology}, it suffices to prove the following: for all $i$ and $\theta$ such that $(dQ^*_{i}/dP_{i})(\theta)>0$,
\begin{equation} \label{eq:methodology_proof_desired}
\int_\Theta t^X_i(\theta, \theta') Q^*(d\theta'|\theta) \geq \int_\Theta t^Y_i(\theta, \theta') Q^*(d\theta'|\theta).
\end{equation}
Note that condition $(dQ^*_{i}/dP_{i})(\theta)>0$ ensures that $Q^*(\cdot|\theta)$ is well-defined.

Fix $i$ and $\theta$. By WSCC, there exists $\hat \theta \in [0, 1]$ such that
\[
\theta' < \hat \theta \implies t_i^X(\theta, \theta') \geq t_i^Y(\theta, \theta') \quad \text{and} \quad \theta' > \hat \theta \implies t^X_i(\theta, \theta') \leq t^Y_i(\theta, \theta').
\]
Let $q^*: \Theta \rightarrow \mathbb{R}_+$ be the Radon-Nikodym derivative of $Q^*(\cdot|\theta)$ with respect to $P(\cdot|\theta)$. Since $dQ^*/dP$ decreases in each argument, $q^*(\theta')$ decreases in $\theta'$. Hence,
\begin{align*}
\int_\Theta [ t^X_i(\theta, \theta') - t^Y_i(\theta, \theta') ]^+ Q^*(d\theta'|\theta) &= \int_0^{\hat \theta} [ t^X_i(\theta, \theta') - t^Y_i(\theta, \theta') ]^+ q^*(\theta') P(d\theta'|\theta) \\
&\geq \int_0^{\hat \theta} [ t^X_i(\theta, \theta') - t^Y_i(\theta, \theta') ]^+ P(d\theta'|\theta) \cdot q^*(\hat \theta) \\
&= \int_\Theta [ t^X_i(\theta, \theta') - t^Y_i(\theta, \theta') ]^+ P(d\theta'|\theta) \cdot q^*(\hat \theta),
\end{align*}
where $z^+ := \max \{z, 0\}$. By the same reason,
\[
\int_\Theta [t^X_i(\theta, \theta')-t^Y_i(\theta, \theta')]^- Q^*(d\theta'|\theta) \leq \int_\Theta [t^X_i(\theta, \theta')-t^Y_i(\theta, \theta')]^- P(d\theta'|\theta) \cdot q^*(\hat \theta),
\]
where $z^- := \max \{-z, 0\}$. Thus,
\begin{align*}
&\int_\Theta t^X_i(\theta, \theta') Q^*(d\theta'|\theta) - \int_\Theta t^Y_i(\theta, \theta') Q^*(d\theta'|\theta) \\
&= \int_\Theta [ t^X_i(\theta, \theta') - t^Y_i(\theta, \theta') ]^+ Q^*(d\theta'|\theta) - \int_\Theta [t^X_i(\theta, \theta')-t^Y_i(\theta, \theta')]^- Q^*(d\theta'|\theta) \\
&\geq \left[ \int_\Theta t^X_i(\theta, \theta') P(d\theta'|\theta) - \int_\Theta t^Y_i(\theta, \theta') P(d\theta'|\theta) \right] \cdot q^*(\hat \theta) \geq 0,
\end{align*}
where the last inequality holds by RRC. This establishes inequality \eqref{eq:methodology_proof_desired}. $\square$

\section{Weak Single-Crossing Condition} \label{appen:scp}

In this section, we prove the equivalence between WSCC and condition \eqref{eq:scc_weak}. To show this, it is sufficient to show Proposition \ref{prop:WSCC_equivalence} below:

\begin{proposition} \label{prop:WSCC_equivalence}

Let $J, K: \Theta \rightarrow \mathbb{R}$. The following conditions are equivalent:

\noindent (i) There exists $\hat \theta \in [0, 1]$ such that for all $\theta'$,
\[
\theta' < \hat \theta \implies J(\theta') \geq K(\theta'), \quad \text{and} \quad \theta' > \hat \theta \implies J(\theta') \leq K(\theta').
\]

\noindent (ii) For all $\theta' > \theta''$,
\[
J(\theta'') < K(\theta'') \implies J(\theta') \leq K(\theta').
\]

\end{proposition}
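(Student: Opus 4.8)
The plan is to prove the two implications of the equivalence separately. The direction (i) $\Rightarrow$ (ii) should be essentially immediate, whereas (ii) $\Rightarrow$ (i) requires constructing the crossing threshold $\hat\theta$ explicitly; the natural candidate is the infimum of the set where $J$ lies strictly below $K$.

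First I would handle (i) $\Rightarrow$ (ii). Suppose (i) holds with threshold $\hat\theta$, and take $\theta' > \theta''$ with $J(\theta'') < K(\theta'')$. The strict inequality is incompatible with $\theta'' < \hat\theta$, since the first clause of (i) would then force $J(\theta'') \geq K(\theta'')$. Hence $\theta'' \geq \hat\theta$, so $\theta' > \theta'' \geq \hat\theta$, and the second clause of (i) delivers $J(\theta') \leq K(\theta')$, which is exactly what (ii) asserts.

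For the converse, I would set $S := \{\theta' \in \Theta : J(\theta') < K(\theta')\}$. If $S$ is empty, then $J \geq K$ on all of $\Theta$ and (i) holds with $\hat\theta = 1$, its upper clause being vacuous. Otherwise I would take $\hat\theta := \inf S \in [0,1]$ and verify both clauses of (i). The lower clause is immediate: any $\theta' < \hat\theta = \inf S$ cannot belong to $S$, so $J(\theta') \geq K(\theta')$. For the upper clause, fix $\theta' > \hat\theta$; since $\hat\theta$ is the greatest lower bound of $S$, the point $\theta'$ is not a lower bound of $S$, so some $\theta'' \in S$ satisfies $\theta'' < \theta'$. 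Because $\theta'' \in S$ gives $J(\theta'') < K(\theta'')$ and $\theta' > \theta''$, condition (ii) yields $J(\theta') \leq K(\theta')$, as required.

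The only real obstacle lies in the converse direction, and it is conceptual rather than computational: the crossing point must be defined as an infimum, because there need be no point at which $J$ and $K$ actually coincide, so one cannot simply pick a ``crossing value''. One must also handle the empty-set boundary case and invoke the infimum property to extract a witness $\theta'' \in S$ strictly below any prescribed $\theta' > \hat\theta$. Once the threshold is chosen in this way, both clauses follow directly from the definition of $S$ together with (ii).
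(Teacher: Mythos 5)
Your proposal is correct and follows essentially the same route as the paper's proof: the forward direction argues $\theta'' \geq \hat\theta$ exactly as the paper does, and the converse defines $\hat\theta$ as the infimum of the strict-inequality set, splits into the empty and nonempty cases, and extracts a witness $\theta'' \in S$ with $\theta'' < \theta'$ via the infimum property, precisely as in Appendix C. The only cosmetic difference is that the paper first normalizes $K \equiv 0$ without loss of generality.
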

\begin{proof} 

Without loss of generality, assume $K \equiv 0$.

\noindent \textbf{(i) $\Rightarrow$ (ii).} Suppose that $\theta'>\theta''$ and $J(\theta'')<0$. Condition (i) implies that $\theta'' \geq \hat \theta$. Since $\theta' > \theta'' \geq \hat \theta$, it follows by condition (i) that $J(\theta') \leq 0$.

\noindent \textbf{(ii) $\Rightarrow$ (i).} We divide into two cases.

\noindent \textit{Case 1: If $\{ \theta' \in \Theta: J(\theta') < 0 \} \neq \varnothing$.} In this case, define $\hat \theta := \inf \{ \theta' \in \Theta: J(\theta') < 0 \}$. Then, for $\theta' < \hat \theta$, we have $J(\theta') \geq 0$ by definition. Next, suppose $\theta' > \hat \theta$. By the property of the infimum, there exists $\theta'' \in [\hat \theta, \theta')$ such that $J(\theta'') < 0$. Condition (ii) implies that $J(\theta') \leq 0$.

\noindent \textit{Case 2: If $\{ \theta' \in \Theta: J(\theta') < 0 \} = \varnothing$.} In this case, $J(\theta') \geq 0$ for all $\theta'$. Hence, if we let $\hat \theta := 1$, then condition (i) holds.

\end{proof}

\section{Proof of Theorem \ref{thm:ranking}} \label{appen:ranking}

By Theorem \ref{thm:methodology}, to prove Theorem \ref{thm:ranking}, it suffices to show that the pairs $(X, Y) = (I, II), (I, A), (A, W), (II, W)$ satisfy WSCC and RRC. By the Revenue Equivalence Principle \citep{Myer81}, RRC holds. It remains to verify WSCC.

\noindent \textbf{(i) $(X, Y) = (I, II)$.} Given $i$ and $\theta$, let $\hat \theta = b^I(\theta)$. Then, since $b^I(\theta)<\theta$,
\begin{align*}
\text{for $\theta' < \hat \theta$}, \quad &t_i^I(\theta, \theta') = b^I(\theta) = \hat \theta > \theta' = t_i^{II}(\theta, \theta') \\
\text{for $\hat \theta < \theta' < \theta$}, \quad &t_i^I(\theta, \theta') = b^I(\theta) = \hat \theta < \theta' = t_i^{II}(\theta, \theta') \\
\text{for $\theta' = \theta$}, \quad &t_i^I(\theta, \theta') = b^I(\theta)/2 = \hat \theta/2 < \theta/2 = t_i^{II}(\theta, \theta') \\
\text{for $\theta' > \theta$}, \quad &t_i^I(\theta, \theta') = 0 = t_i^{II}(\theta, \theta'). \quad \square
\end{align*}

\noindent \textbf{(ii) $(X, Y) = (I, A)$.} Given $i$ and $\theta$, let $\hat \theta = \theta$. It is straightforward to show that $b^I(\theta)>b^A(\theta)$. Hence,
\begin{align*}
\text{for $\theta' < \theta$}, \quad &t_i^I(\theta, \theta') = b^I(\theta) > b^A(\theta) = t_i^A(\theta, \theta') \\
\text{for $\theta' > \theta$}, \quad &t_i^I(\theta, \theta') = 0 < b^A(\theta) = t_i^A(\theta, \theta'). \quad \square
\end{align*}

\noindent \textbf{(iii) $(X, Y) = (A, W)$.} Note first that
\begin{align} \label{eq:bW>bA}
b^W(\theta) &= \int_0^\theta z [-\log(1-F(z))]' dz = -\theta \log(1-F(\theta)) + \int_0^\theta \log(1-F(z)) dz \notag \\
&> \theta - \int_0^\theta F(z) dz = b^A(\theta),
\end{align}
where the third inequality holds because $-\log(1-z) > z$ for $z \in (0, 1)$. 

Now, let $i$ and $\theta$ be given. By inequality \eqref{eq:bW>bA} and continuity, there exists $0<\hat \theta<\theta$ such that $b^W(\hat \theta) = b^A(\theta)$. Then,
\begin{align*}
\text{for $\theta' < \hat \theta$}, \quad &t_i^A(\theta) = b^A(\theta) = b^W(\hat \theta) > b^W(\theta') = t_i^W(\theta, \theta') \\
\text{for $\hat \theta < \theta' < \theta$}, \quad &t_i^A(\theta) = b^A(\theta) = b^W(\hat \theta) < b^W(\theta') = t_i^W(\theta, \theta') \\
\text{for $\theta' \geq \theta$}, \quad &t_i^A(\theta) = b^A(\theta) < b^W(\theta) = t_i^W(\theta, \theta'). \quad \square
\end{align*}

\noindent \textbf{(iv) $(X, Y) = (II, W)$.} We proceed in three steps.

\noindent \textbf{Step 1.} \textit{$\lim_{\theta \rightarrow 0} (b^W)'(\theta) = \lim_{\theta \rightarrow 0} \theta f(\theta) / [1-F(\theta)] = 0$.} 

Suppose on the contrary that $\lim_{\theta \rightarrow 0} \theta f(\theta) / [1-F(\theta)] = L > 0$, where the limit exists by condition \eqref{eq:F_hazard}. Condition \eqref{eq:F_hazard} implies further that $\theta f(\theta) / [1-F(\theta)] \geq L$ for all $\theta$. Hence, for all $\theta_L, \theta_H$ with $0 < \theta_L < \theta_H < 1$, 
\begin{equation} \label{eq:II_W_step1}
\int_{\theta_L}^{\theta_H} \frac{f(\theta)}{1-F(\theta)} d\theta \geq \int_{\theta_L}^{\theta_H} \frac{L}{\theta} d\theta \implies -\log \frac{1-F(\theta_H)}{1-F(\theta_L)} \geq L \log \frac{\theta_H}{\theta_L}.
\end{equation}
However, if we take the limit $\theta_L \rightarrow 0$, the left-hand side converges to $- \log [1-F(\theta_H)]$, whereas the right-hand side diverges to infinity. Hence, inequality \eqref{eq:II_W_step1} cannot hold for sufficiently small values of $\theta_L$, a contradiction. $\blacksquare$

\medskip

\noindent \textbf{Step 2.} \textit{There exist $\theta^* \in (0, 1)$ such that}
\begin{equation} \label{eq:pf_II_W_theta_*}
\theta \leq \theta^* \implies b^W(\theta) \leq \theta, \quad \text{and} \quad \theta \geq \theta^* \implies b^W(\theta) \geq \theta.
\end{equation}

By definition, $b^W(0) = 0$. Also, by Step 1, $\lim_{\theta \rightarrow 0} (b^W)'(\theta) = 0 < 1$. It follows that for all $\theta$ sufficiently close to $0$, we have $b^W(\theta) < \theta$. Furthermore, \citet[Prop. 1]{Kris97} show that $\lim_{\theta \rightarrow 1} b^W(\theta) = \infty$. Hence, for all $\theta$ sufficiently close to $1$, we have $b^W(\theta) > \theta$. Hence, there exists an intersection $\theta^* \in (0, 1)$ satisfying $b^W(\theta^*)=\theta^*$. Because condition \eqref{eq:F_hazard} implies that $b^W$ is convex, property \eqref{eq:pf_II_W_theta_*} holds. $\blacksquare$

\medskip

\noindent \textbf{Step 3.} \textit{$(X, Y) = (A, W)$ satisfies WSCC.} Given $i$ and $\theta$, we divide into two cases.

\noindent \textit{Step 3-Case 1: If $\theta \leq \theta^*$.} Let $\hat \theta = \theta$. Then,
\begin{align*}
\text{for $\theta' < \hat \theta$,} \quad & t_i^{II}(\theta) = \theta' \geq b^W(\theta') = t_i^{W}(\theta, \theta') \\
\text{for $\theta' > \hat \theta$,} \quad & t_i^{II}(\theta, \theta') = 0 < b^W(\theta) = t_i^{W}(\theta, \theta').
\end{align*}

\noindent \textit{Step 3-Case 2: If $\theta > \theta^*$.} Let $\hat \theta = \theta^*$. Then,
\begin{align*}
\text{for $\theta'<\hat \theta$,} \quad &t_i^{II}(\theta, \theta') = \theta' \geq b^W(\theta') = t_i^W(\theta, \theta') \\
\text{for $\hat \theta < \theta' < \theta$,} \quad &t_i^{II}(\theta, \theta') = \theta' \leq b^W(\theta') = t_i^W(\theta, \theta') \\
\text{for $\theta' = \theta$,} \quad &t_i^{II}(\theta, \theta') = \theta/2 < \theta \leq b^W(\theta) = t_i^{W}(\theta, \theta') \\
\text{for $\theta' > \theta$,} \quad &t_i^{II}(\theta, \theta') = 0 < b^W(\theta) = t_i^W(\theta, \theta'). \quad \square
\end{align*}

\section{Proof of Proposition \ref{prop:linkage_4}} \label{appen:linkage_4}

We claim that $(X, Y) = (I, II), (I, A), (A, W), (II, W), (I, W)$ satisfy both conditions (i) and (ii), and the others satisfy neither. 

\noindent \textbf{Condition (i).} In the proof of Theorem \ref{thm:ranking}, we have already shown that $(X, Y) = (I, II), (I, A), (A, W), (II, W)$ satisfy WSCC. Also, a similar argument shows that $(X, Y) = (I, W)$ satisfies WSCC. It straightforward to check that the remaining pairs of auctions do not satisfy WSCC.

\noindent \textbf{Condition (ii).} First, \citet[Sec. 7.1-7.2]{Kris02} shows that $(X, Y)=(I, II)$ satisfies LC2 and $(X, Y)=(I, A)$ satisfies LC1.

Also, to see that $(X, Y) = (A, W)$ satisfies LC1, note that since $t_i^A(\theta, \theta')$ is constant in $\theta'$, we have $\partial_2 e_i^A(\theta, \theta) = 0$. However, since $t_i^W(\theta, \theta')$ increases in $\theta'$, affiliation implies that $\partial_2 e_i^W(\theta, \theta) \geq 0$. Hence, $(X, Y) = (A, W)$ satisfies LC1.

Next, we show that $(X, Y) = (II, W)$ satisfies LC1. When bidder $i$ has type $\theta$, denote the cumulative distribution and the probability density of the competitor's type as $F(\cdot|\theta)$ and $f(\cdot|\theta)$. Also, let $\lambda(\cdot|\theta)$ be the hazard rate:
\[
\lambda(z|\theta) := \frac{f(z|\theta)}{1-F(z|\theta)}.
\]
It is well-known that $\lambda(z|\theta)$ decreases in $\theta$ \citep[Fact 3]{Kris97}.

\citet[proof of Thm. 3]{Kris97} show that
\[
e_i^{II}(\hat \theta, \theta) = \int_0^{\hat \theta} z f(z|\theta) dz, \quad \text{and} \quad e_i^W(\hat \theta, \theta) = \int_0^{\hat \theta} z \lambda(z|z) [1-F(z|\theta)] dz.
\]
Hence, to prove that $(X, Y) = (II, W)$ satisfies LC1, it suffices to show that
\begin{equation} \label{eq:II_W_LC_desired}
\frac{\partial}{\partial \theta} f(z|\theta) \leq \lambda(z|z) \frac{\partial}{\partial \theta} [1-F(z|\theta)] \quad \text{for all $z \leq \theta$}.
\end{equation}
To prove inequality \eqref{eq:II_W_LC_desired}, note that because $\lambda(z|\theta)$ decreases in $\theta$,
\begin{equation} \label{eq:II_W_LC_step1}
    \lambda(z|\theta) \leq \lambda(z|z) \implies f(z|\theta) \leq \lambda(z|z) [1-F(z|\theta)].
\end{equation}
Also, by the same reason, for $\epsilon > 0$,
\[
 \lambda(z|\theta+\epsilon) \leq \lambda(z|\theta) \Rightarrow \frac{f(z|\theta+\epsilon)-f(z|\theta)}{\epsilon \cdot f(z|\theta)} \leq \frac{[1-F(z|\theta+\epsilon)]-[1-F(z|\theta)]}{\epsilon \cdot [1-F(z|\theta)]}.
\]
Taking the limit $\epsilon \rightarrow 0$ yields
\begin{equation} \label{eq:II_W_LC_step2}
    \frac{\frac{\partial}{\partial \theta} f(z|\theta)}{f(z|\theta)} \leq \frac{\frac{\partial}{\partial \theta} [1-F(z|\theta)]}{1-F(z|\theta)}.
\end{equation}
Inequalities \eqref{eq:II_W_LC_step1}-\eqref{eq:II_W_LC_step2} imply the desired inequality \eqref{eq:II_W_LC_desired}.

Finally, $(X, Y) = (I, W)$ satisfies LC1 because $(X, Y) = (I, A), (A, W)$ satisfy LC1. It is easy to verify that the remaining pairs satisfy neither LC1 nor LC2. $\square$

\bibliographystyle{elsarticle-harv}
\bibliography{bib_draft}

\end{document}